\providecommand{\tabularnewline}{\\}
\theoremstyle{plain}
\newtheorem{assumption}{\protect\assumptionname}
\theoremstyle{definition}
\newtheorem{defn}{\protect\definitionname}
\theoremstyle{plain}
\newtheorem{lem}{\protect\lemmaname}
\theoremstyle{plain}
\newtheorem{prop}{\protect\propositionname}
\theoremstyle{plain}
\newtheorem{thm}{\protect\theoremname}
\theoremstyle{plain}
\newtheorem{cor}{\protect\corollaryname}
\providecommand{\U}[1]{\protect\rule{.1in}{.1in}}
\providecommand{\assumptionname}{Assumption}
\providecommand{\corollaryname}{Corollary}
\providecommand{\definitionname}{Definition}
\providecommand{\lemmaname}{Lemma}
\providecommand{\propositionname}{Proposition}
\providecommand{\theoremname}{Theorem}
\begin{document}

\title{The Industry Supply Function and the Long-Run Competitive Equilibrium
with Heterogeneous Firms\thanks{We thank Gian Luca Clementi, Ernesto Dal Bó, Daniel Gottlieb,  Andrés
Rodriguez-Claire, the co-editor, and two anonymous referees for helpful
comments. Esponda: Department of Economics, UC Santa Barbara, 2127
North Hall, Santa Barbara, CA 93106, iesponda@ucsb.edu; Pouzo: Department
of Economics, UC Berkeley, 530 Evans Hall \#3880, Berkeley, CA 94720,
dpouzo@econ.berkeley.edu.} \bigskip{}
}

\author{%
\begin{tabular}{cc}
Ignacio Esponda~~~ & ~~~~~~~~Demian Pouzo\tabularnewline
(UC Santa Barbara)~~~ & ~~~~~~~~(UC Berkeley)\tabularnewline
\end{tabular}}
\maketitle
\begin{abstract}
In developing the theory of long-run competitive equilibrium (LRCE),
Marshall (1890) used the notion of a representative firm. The identity
of this firm, however, remained unclear. Subsequent theory either
focused on the case where all firms are identical or else incorporated
heterogeneity but disregarded the notion of a representative firm.
Using Hopenhayn's (1992) model of competitive industry dynamics, we
extend the theory of LRCE to account for heterogeneous firms and show
that the long-run supply function can indeed be characterized as the
solution to the minimization of a representative average cost function.
\end{abstract}
\bigskip{}

Keywords: long-run competitive equilibrium, representative firm\bigskip{}

\thispagestyle{empty}

\newpage{}

\thispagestyle{empty}

\tableofcontents{}\newpage{}

\setcounter{page}{1}

\section{Introduction}

The theory of long-run competitive equilibrium (LRCE), first developed
by Marshall in his \emph{Principles of Economics} (1890), has had
a profound influence on our understanding of competitive markets.\nocite{marshall1890principles}
One distinguishing feature of Marshall's theory is his conceptualization
of the (long-run) industry supply function. \citet{pigou1928analysis},
\citet{viner1953cost}{[}1931{]} and others subsequently formalized
Marshall's notion of LRCE. The latter author, in particular, is credited
for popularizing the typical diagram taught in introductory courses
and reproduced in Figure \ref{fig:txtbook}.

The figure represents an industry with fixed input prices where all
firms are identical and characterized by the marginal (MC) and average
(AC) cost functions depicted in the left panel. In an LRCE, price
is at the minimum point of the AC function, $p^{e}$, and aggregate
quantity is given by the demand function evaluated at that price,
$Q_{0}^{e}$. Suppose that there is a shift of the (inverse) demand
function from $P_{0}^{d}$ to $P_{1}^{d}$ in Figure \ref{fig:txtbook}.
In the short run, the number of firms stays fixed, so price and quantity
increase from the original LRCE at point $A$ to the new short-run
equilibrium at point $B$, a movement occurring along the short-run
supply function $S_{0}$. But then firms make positive (economic)
profits, and these profits attract additional firms into the market.
In the long-run, the new LRCE is at point $C$, where all firms make
zero profits at price $p^{e}$ and aggregate production increases
to $Q_{1}^{e}$. Thus, the (long-run) industry supply function, $S_{LR}$,
is horizontal at the minimum of the average cost function, $p^{e}$.

A distinguishing characteristic of Marshall's analysis is the notion
of a \emph{representative firm}. While Marshall recognized that there
are \emph{different} firms in an industry, subsequent developments
have focused on the case where all firms are identical in a long-run
equilibrium. \citet{viner1953cost}{[}1931{]}, pg. 222, justifies
this view:
\begin{quote}
``If there are particular units of the factors which retain permanently
advantages in value productivity over other units of similar factors,
these units, if hired, will have to be paid for in the long-run at
differential rates proportional to their value productivity, and if
employed by their owner should be charged for costing purposes with
the rates which could be obtained for them in the open market and
should be capitalized accordingly.''
\end{quote}
\bigskip{}

\begin{figure}[H]
\begin{centering}
\newpsobject{showgrid}{psgrid}{subgriddiv=1,griddots=10,gridlabels=7pt,gridcolor=red} 	
	\psset{unit=6mm} 	
	\psset{algebraic=true} 	
	\begin{pspicture}(0,0)(23,10) 
\psline[linewidth=0.05]{->}(0,0)(10,0) 	
	\psline[linewidth=0.05]{->}(0,0)(0,10) 	
		\rput(9,-0.5){quantity} 	
	\rput(-1.2,9){price \&} 	
	\rput(-1.1,8.5){costs}			 	
\psline(0,0)(8,7) 	
	\rput(8,7.5){{ $MC$}} 	
	\pscurve(2,8)(4,4)(9,6) 	
	\rput(9.6,6){{$AC$}}  	
	\psline[linestyle=dotted](0,3.85)(12,3.85)		 
		\psline[linestyle=dotted](4.45,0)(4.45,3.85)		 	
		\rput(4.45,-0.5){\Large{ $q^{e}$}}  
		\rput(-0.65,3.85){\Large{ $p^{e}$}}   	
	
	\psline[linewidth=0.05]{->}(12,0)(22,0) 	
	\psline[linewidth=0.05]{->}(12,0)(12,10) 	
		\rput(11.2,9){price} 	
		\psline[linestyle=solid](12,3.85)(22,3.85)

	\psline(14.3,2)(20,7) 	
	\rput(20,7.5){$S_{0}$} 	

\pscurve(14,8.1)(16,4.1)(21,2.1) 	
	\rput(13.5,8.5){$P^{d}_{0}$}  	
		\pscurve[linestyle=dashed](14,11)(16,6)(21,3.2) 	
	\rput(13.5,11){$P^{d}_{1}$}  	
	\psdot*[dotsize=5pt](16.43,3.85) 
		\psdot*[dotsize=5pt](19.43,3.85) 	
	\psdot*[dotsize=5pt](17.5,4.85)	 	
	\rput(16.43,4.35){$A$}      	
	\rput(17.5,5.35){$B$}   	
	\rput(19.43,4.35){$C$}    
	
\psline[linestyle=dotted](16.43,0)(16.43,3.85)  	
	\psline[linestyle=dotted](19.43,0)(19.43,3.85)  	

	\rput(22.6,3.85){$S_{LR}$}    
		\rput(16.43,-0.5){$Q^{e}_{0}$}  	
	\rput(19.43,-0.5){$Q^{e}_{1}$}  	
	\rput(22.1,-0.5){aggregate}  	
	\rput(22.1,-1.0){~quantity}

		\end{pspicture} 	
\par\end{centering}
\bigskip{}

\bigskip{}

\caption{\label{fig:txtbook}Textbook model of long-run competitive equilibrium.\protect \\
\emph{\footnotesize{}The left panel shows the average and marginal
cost curves of each identical firm. The right panel shows the industry
equilibrium. Profits are initially zero at equilibrium point A. As
demand shifts from $P_{0}^{d}$ to $P_{1}^{d}$, there is a short-run
change from A to B, price goes up, and profits become positive. Positive
profits drive entry into the industry, and, as more firms enter, the
industry moves to long-run equilibrium C, where profits are back to
zero and there is no further entry. In particular, the long-run supply
function is horizontal at the price $p^{e}$ that minimizes the AC
curve, which also happens to be where the MC and AC curves intersect.}}
\end{figure}

\bigskip{}
Viner's argument may justify why firms do not make rents in the presence
of markets that bid up the price of advantageous factors, such as
exceptional managerial ability. But the argument does not imply that
firms with different technologies or productivities cannot coexist
in equilibrium. A realistic feature of an industry is that low-productivity
firms can potentially become high-productivity firms and vice versa.
This feature implies that equilibrium will be characterized both by
coexistence of heterogeneous firms and turnover (entry and exit),
and it does not seem appropriate to exclude these realistic features
from a theory of LRCE.

Our objective in this paper is to go back to Marshall's original motivation
and to extend the classical theory of LRCE to the case of heterogeneous
firms. Fortunately, we don't have to formulate a new model, since
\citet{hopenhayn1992entry} actually introduced and studied a model
of competitive industry dynamics where firms' productivities evolve
over time and exit and entry is an equilibrium phenomenon. We take
the steady-state equilibrium in Hopenhayn's model as the natural extension
of the theory of LRCE to the case with heterogeneous firms. \citet{hopenhayn1992entry},
however, did not link his work to the early theory on LRCE, and our
contribution is to fill-in this gap.

Our main result is that the (long-run) industry supply function with
heterogeneous firms can indeed be characterized as the solution to
the minimization of a representative average cost function, as Marshall
originally envisioned. The standard textbook case, depicted in Figure
\ref{fig:txtbook}, is just a special case where there is no firm
heterogeneity.

There are several reasons to care about this result. First, it formalizes
Marshall's original motivation of a representative firm and of the
industry supply function in the presence of heterogeneous firms. Second,
it provides a connection between the early literature on LRCE and
the modern literature on industry dynamics (to be reviewed below).
Third, it makes the model of LRCE with heterogeneous firms accessible
to a larger audience (in particular, the example in Section \ref{sec:example}
conveys much of the intuition and can be taught in introductory courses).
Finally, it permits a reinterpretation and extension of the classic
textbook model as a reduced-form representation of a richer economy
with heterogeneous firms and entry and exit.

Our paper links the classic theory of LRCE, which does not explicitly
model dynamics, with the modern literature on \emph{competitive} industry
dynamics started by \citet{lucas1967adjustment}. \citet{lucas1971investment}
developed the first theory of dynamic competitive equilibrium with
stochastic demand, costly capital stock adjustments, and correct (i.e.,
``rational'') expectations about future prices, but firms are homogeneous
and there is no entry and exit. \citet{lucas1978size} studied a model
were firms are heterogeneous, but there is still no entry and exit.
Subsequent developments incorporated both firm heterogeneity and entry
and exit, at the expense of no longer studying the dynamics of capital
accumulation. \citet{jovanovic1982selection} developed the first
of such models. Each period, a firm draws a productivity shock from
a distribution that depends on an unknown productivity type. Firms
have different productivity types and, as they learn their own type,
more productive firms stay and less productive firms exit. The objective
of these papers was to study the dynamic evolution of a competitive
industry, not the steady state. Consequently, all of the interesting
action happens outside the steady state and, indeed, there is no entry
and exit in the steady state of these models.

\citet{hopenhayn1992entry} considers a model with both heterogeneous
firms and entry and exit in the steady state. In contrast to Jovanovic's
model, firms know their productivity types, but productivity types
evolve randomly in such a way that firms that have a low productivity
today can have a high productivity tomorrow and vice versa. As mentioned
earlier, this is the model that we will use to formalize Marshall's
idea that the LRCE of a competitive industry is characterized by the
cost function of a representative firm.\footnote{Hopenhayn's model has been extended in different directions. For example,
\citet{melitz2003impact} adapts Hopenhayn's model to a monopolistically
competitive industry and \citet{10.1257/mac.20150017} incorporate
capital accumulation to Hopenhayn's competitive framework. There is
also a large literature, beginning with the work of \citet{ericson1995markov},
that studies dynamic equilibrium with capital accumulation, stochastic
shocks, heterogeneous firms, and entry and exit under \emph{imperfect}
competition.}

To summarize, previous literature has followed one of two approaches.
In the first approach, all firms are homogeneous, so the existence
of a representative firm follows trivially, thus rendering a convenient
and tractable framework. In the second approach, firm heterogeneity
is explicitly introduced, but the notion of a representative firm
is ignored. Our approach incorporates the best of both worlds: We
explicitly introduce firm heterogeneity and show that, under certain
conditions, a representative firm does exist.\footnote{\citet{chetty1986dynamic} consider a fairly different model of industry
dynamics where firms are heterogeneous and there is entry and exit.
They derive a long-run supply function but show that it cannot be
rationalized as coming from a representative firm.}

For brevity, we focus on the case where input prices are fixed, which
implies that the long-run industry supply function is horizontal.
The extension to the case of input prices that increase with aggregate
quantity was controversial in the early literature; see \citet{opocher2008industry}
for an insightful historical account. The initial approach, by \citet{pigou1928analysis},
\citet{viner1953cost}{[}1931{]}, and others, considered a cost function
that depends both on individual and aggregate quantity. Subsequent
literature (e.g., \citet{kaldor1934equilibrium}, \citet{allen1938mathematical},
and \citet{john1946value})) criticized this reduced-form approach
because of lack of microfoundations. For either approach, the extension
of our result is straightforward: A given aggregate quantity leads
to a given equilibrium input price and, fixing this input price, the
LRCE price is still the minimum point on a representative average
cost function. The long-run industry (inverse) supply function is
simply the mapping from aggregate quantities to these minimum points.
In particular, the aggregate supply function may be increasing if
input prices increase with aggregate quantity.

\section{\label{sec:model}Model and illustrative example}

\subsection{Setup}

We essentially adopt Hopenhayn's (1992) infinite-horizon model of
a competitive industry with a continuum of potential firms, each of
which can produce a homogenous product at total cost $C(q,\theta)$
where $q$ is the quantity produced, $\theta\in\Theta=[\theta_{L},\theta_{H}]\subset\mathbb{R}$
is the firm's type, and $\theta_{L}<\theta_{H}$.

Each period $t=1,2,...$, product demand is given by $Q^{d}(p)$,
where $p\geq0$ is the output price. Firms take price as given and
choose quantity to maximize profit. There is also an infinite mass
of potential entrants with discount factor $\delta\in[0,1]$ who can
decide to enter the market and become a firm. A potential entrant
does not know her type, but knows that her type is independently distributed
according to the probability measure $\nu\in\Delta(\Theta)$. A firm
entering the market pays a one-time entry cost of $\kappa\geq0$.
After paying this cost, a firm immediately learns its own type. Thereafter,
types evolve independently across firms according to the probability
measure $F(\cdot\mid\theta)\in\Delta(\Theta)$, where $\theta$ is
the current type. At the end of the period, each firm makes an exit
decision knowing their current, but not future, type.\footnote{An alternative timing is one where the firm first observes their future
type and then makes exit decisions. All results would go through in
this case, except of course the value function and the corresponding
operators need to be modified accordingly.} At the beginning of each period, a firm learns its type $\theta$
and decides how many units $q$ to produce at a cost $C(q,\theta)$,
where $C(0,\theta)$ is a fixed cost that is sunk once a firm decides
to enter or stay in the industry. There is also an exogenous exit
probability $\rho$. A firm that exits the market (endogenously or
exogenously) does so permanently and obtains a payoff of zero.

We maintain the following assumptions.

\bigskip{}

\begin{assumption}
\label{assu:QdW}Demand: There exists a choke price $v>0$ (possibly
infinity) such that $Q^{d}(\cdot)$ is continuous and decreasing for
all $p\in(0,v)$, and $Q^{d}(p)=0$ for all $p\geq v$.
\end{assumption}
\smallskip{}

\begin{assumption}
\label{assu:costs}Costs: For all $\theta\in\Theta$: $C(\cdot,\theta)$
is continuously differentiable, with $C(q,\theta)\geq0$, $C'(q,\theta)\geq0$,
and $C''(q,\theta)>0$ for all $q\geq0$, and $\lim_{q\rightarrow\infty}C'(q,\theta)=\infty$;
For all $q\geq0$, $C(q,\cdot)$ is increasing.
\end{assumption}
\smallskip{}

\begin{assumption}
\label{assu:FOSD}Order over types: For any $\theta_{1}<\theta_{2}$,
$F(\theta\mid\theta_{2})<F(\theta\mid\theta_{1})$ for all $\theta\in(\theta_{L},\theta_{H})$.
\end{assumption}
\smallskip{}

\begin{assumption}
\label{assu:rho>0} The exogenous probability of exit is positive,
i.e., $\rho>0$.
\end{assumption}
\smallskip{}

\begin{assumption}
\label{assu:CDFs_cont}Measures over types: (i) $\nu$ has a continuous
probability density function (pdf), $f_{\nu}(\cdot)$, with support
equal to $\Theta$; (ii) For all $\theta$: $F(\cdot\mid\theta)$
has a pdf $f(\cdot\mid\theta)$, with support equal to $\Theta$,
and $(\theta',\theta)\mapsto f(\theta'\mid\theta)$ is jointly continuous.
\end{assumption}
\bigskip{}

Assumption \ref{assu:QdW} implies the existence of a downward sloping
inverse demand function, $P^{d}(\cdot)$. Assumption \ref{assu:costs}
implies existence and uniqueness of an optimal quantity
\[
q(p,\theta)\equiv\arg\max_{q\geq0}pq-C(q,\theta)
\]
The assumption also implies that the profit function
\[
\pi(p,\theta)\equiv pq(p,\theta)-C(q(p,\theta),\theta).
\]
is nonincreasing in $\theta$, and decreasing for $(p,\theta)$ such
that $q(p,\theta)>0$.

Assumption \ref{assu:FOSD} postulates a first-order stochastic dominance
relationship across types, so that higher types today are more likely
to become higher types tomorrow. Assumption \ref{assu:rho>0} guarantees
that the life span of a firm is almost surely finite; in particular,
if there is no entry, then there must be zero aggregate production
in equilibrium. This assumption is made for simplicity, puts the focus
on equilibria with positive entry, and allows us to include the special
case where firms' types are permanent. Here, we differ from \citet{hopenhayn1992entry},
who instead assumes that $\rho=0$, guarantees finite lifespan with
an additional recurrence condition on $F$ that rules out permanent
types, and subsequently restricts attention to equilibria with positive
entry. Our assumption is an alternative way to capture the same phenomenon
and we prefer it because it allows us to study the simple case where
types are permanent (see below for an example).

Finally, Assumption \ref{assu:CDFs_cont} lists technical conditions
regarding the measures over types.

\smallskip{}

\begin{assumption}
\label{assu:prod>0}$\pi(v,\theta_{H})>\kappa$.
\end{assumption}
\smallskip{}

Assumption \ref{assu:prod>0} is made for simplicity. It rules out
equilibria with zero aggregate production by requiring that even the
highest-cost firm prefers to enter whenever price equals the maximum
willingness to pay, $v$.

The expected net present discounted value of a firm of type $\theta$
who faces (steady-state) price $p$ every period is
\begin{equation}
V(p,\theta)=\pi(p,\theta)+\delta(1-\rho)\max\left\{ \int_{\Theta}V(p,\theta')F(d\theta'\mid\theta),0\right\} .\label{eq:Bellman}
\end{equation}

Assumption \ref{assu:FOSD} and the fact that $\pi(p,\cdot)$ is decreasing
imply that $\int_{\Theta}V(p,\theta')F(d\theta'\mid.)$ is decreasing.
Therefore, the optimal exit decision in steady state is characterized
by a marginal type $m\in\Theta$ with the property that all lower
types stay and all higher types exit the market.

Let $\mu(n,m)$ denote the steady-state measure of types of firms
given the mass of entrants $n\geq0$ and the marginal type $m\in\Theta$.
In particular, for any Borel set $A\subseteq\Theta$,
\begin{equation}
\mu(n,m)(A)=\nu(A)n+(1-\rho)\int_{\theta_{L}}^{m}F(A\mid\theta)\mu(n,m)(d\theta).\label{eq:defMU}
\end{equation}
The assumption that $\rho>0$ guarantees existence of a steady-state
measure. 

The corresponding aggregate supply at price $p$ is
\[
Q^{s}(p;n,m)\equiv\int_{\Theta}q(p,\theta)\mu(n,m)(d\theta).
\]

\bigskip{}

\begin{defn}
\label{def:LRCE}A tuple $\left\langle p^{e},n^{e},m^{e}\right\rangle $
is a \textbf{long-run competitive equilibrium (LRCE)} if the following
conditions are satisfied:

(i) Profit maximization and market clearing: $Q^{d}(p^{e})=Q^{s}(p^{e};n^{e},m^{e})$.

(ii) Free entry: $\int_{\Theta}V(p^{e},\theta)\nu(d\theta)\leq\kappa$,
with equality if $n^{e}>0$.

(iii) Optimal exit: $\int_{\Theta}V(p^{e},\theta')F(d\theta'\mid m^{e})=0$
if $m^{e}\in(\theta_{L},\theta_{H})$, $\geq0$ if $m^{e}=\theta_{H}$,
and $\leq0$ if $m^{e}=\theta_{L}$.

\end{defn}
\bigskip{}

An LRCE captures the steady state of the dynamic competitive industry.\footnote{\citet{hopenhayn1992entry} called an LRCE a stationary equilibrium
and showed that it corresponds to the steady state of a perfect foresight
equilibrium of the dynamic environment.} The first condition requires market clearing and already incorporates
the assumption of profit maximization. The second condition is a free
entry condition that requires the net present value of entry to equal
the entry cost if the mass of entrants is positive. The third condition
requires the marginal type to be indifferent between staying or exiting
the market, provided it is an interior type.

\bigskip{}

\begin{lem}
\label{lem:EqmQ>0}In any LRCE, both aggregate production and entry
must be positive.
\end{lem}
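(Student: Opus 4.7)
The plan is to prove the two conclusions in sequence: first that aggregate production is positive at the equilibrium, and then that this already forces positive entry. Both steps are by contradiction and combine the three equilibrium conditions in Definition \ref{def:LRCE} with the monotonicity and boundary assumptions already imposed.

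For positive aggregate production, suppose $Q^{s}(p^{e}; n^{e}, m^{e}) = 0$. Market clearing (condition (i)) then gives $Q^{d}(p^{e}) = 0$, so by Assumption \ref{assu:QdW} we have $p^{e} \geq v$. The next step is to lower-bound the value function at this price. The envelope theorem applied to the definition of $\pi$ shows that $\pi(p,\theta)$ is nondecreasing in $p$ and, because $C(q,\cdot)$ is increasing by Assumption \ref{assu:costs}, nonincreasing in $\theta$. Combined with Assumption \ref{assu:prod>0}, this yields
\[
\pi(p^{e},\theta) \geq \pi(v,\theta) \geq \pi(v,\theta_{H}) > \kappa \qquad \text{for every } \theta \in \Theta.
\]
Since the continuation term in the Bellman equation (\ref{eq:Bellman}) is a max with zero and is multiplied by the nonnegative scalar $\delta(1-\rho)$, we obtain $V(p^{e},\theta) \geq \pi(p^{e},\theta) > \kappa$ pointwise, so $\int_{\Theta} V(p^{e},\theta)\,\nu(d\theta) > \kappa$, contradicting the unlimited-entry condition (ii).

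For positive entry, assume $n^{e}=0$ and substitute into the fixed-point equation (\ref{eq:defMU}) with $A = \Theta$. Using $F(\Theta\mid\theta)=1$,
\[
\mu(0,m^{e})(\Theta) \;=\; (1-\rho)\,\mu(0,m^{e})([\theta_{L},m^{e}]) \;\leq\; (1-\rho)\,\mu(0,m^{e})(\Theta),
\]
and Assumption \ref{assu:rho>0} then forces $\mu(0,m^{e})(\Theta) = 0$. Consequently $Q^{s}(p^{e}; 0, m^{e})=0$, contradicting the positivity of aggregate production already established.

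The hardest step is the first one: propagating the instantaneous-profit bound in Assumption \ref{assu:prod>0} through both directions of monotonicity of $\pi$ and then through the Bellman recursion to reach $V(p^{e},\theta)>\kappa$ uniformly in $\theta$. Once that inequality is in hand, the rest is essentially bookkeeping: condition (ii) is violated, and the entry argument follows mechanically from the stationarity equation together with $\rho>0$.
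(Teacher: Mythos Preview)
Your proof is correct and follows essentially the same route as the paper's: you show that $Q^{d}(p^{e})=0$ would force $p^{e}\geq v$, use the monotonicity of $\pi$ together with Assumption~\ref{assu:prod>0} to get $V(p^{e},\theta)>\kappa$ uniformly, and contradict the entry condition; then you deduce $n^{e}>0$ from $\rho>0$. The only cosmetic difference is that you start the contradiction from $Q^{s}=0$ rather than $Q^{d}=0$ (equivalent by market clearing) and spell out the steady-state computation behind the $n^{e}>0$ step more explicitly than the paper, which simply invokes $\rho>0$.
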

\begin{proof}
Suppose $p^{e}$ is an LRCE price and $Q^{d}(p^{e})=0$. By Assumption
\ref{assu:QdW}, $p^{e}\geq v$. By the fact that $\pi(p,\theta)$
is nondecreasing in $p$ and nonincreasing in $\theta$ and by Assumption
\ref{assu:prod>0}, $\pi(p^{e},\theta)\geq\pi(v,\theta)\geq\pi(v,\theta_{H})>\kappa$
for all $\theta$. Thus, $V(p^{e},\theta)>\kappa$ for all $\theta$,
so that $p^{e}$ does not satisfy the free entry condition (ii) in
Definition \ref{def:LRCE}, contradicting the fact that $p^{e}$ is
an LRCE price. Therefore, $Q^{d}(p^{e})>0$ and by condition (i) in
Definition \ref{def:LRCE}, $Q^{s}(p^{e};n^{e},m^{e})>0$, which then
implies, by the assumption that $\rho>0$, that $n^{e}>0$.
\end{proof}
\bigskip{}

\begin{defn}
\label{def:supply}The \textbf{long-run industry (inverse) supply
function} is a function $Q\mapsto P_{LR}^{s}(Q)$ with the property
that, for any $Q>0$, $p=P_{LR}^{s}(Q)$ is the unique price satisfying
the following conditions for some $m\in\Theta$ and $n>0$:

(i) $Q=Q^{s}(p;n,m)$.

(ii) $\int_{\Theta}V(p,\theta)\nu(d\theta)=\kappa$.

(iii) $\int_{\Theta}V(p,\theta')F(d\theta'\mid m)=0$ if $m\in(\theta_{L},\theta_{H})$,
$\geq0$ if $m=\theta_{H}$, and $\leq0$ if $m=\theta_{L}$.
\end{defn}
\bigskip{}

The next result follows immediately from the definitions and from
Lemma \ref{lem:EqmQ>0}'s implication that the free entry condition
in Definition \ref{def:supply} holds with equality in equilibrium.
\begin{prop}
\label{prop:D=00003DS}Suppose that the long-run industry supply function
$P_{LR}^{s}(\cdot)$ exists. Then $p^{e}$ is part of an LRCE if and
only if $p^{e}=P_{LR}^{s}(Q^{d}(p^{e}))$ and $Q^{d}(p^{e})>0$.
\end{prop}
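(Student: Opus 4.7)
The proposition is essentially a bookkeeping exercise comparing the LRCE conditions in Definition \ref{def:LRCE} with the defining conditions of $P_{LR}^{s}(\cdot)$ in Definition \ref{def:supply}. The only substantive ingredient beyond pure definition-chasing is Lemma \ref{lem:EqmQ>0}, which promotes the weak inequality in the LRCE entry condition to the equality required by the supply function. My plan is to prove the two implications in turn, pointing out at each step which condition in one definition matches which in the other.

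For the ``only if'' direction, I would start with an LRCE $\langle p^e, n^e, m^e\rangle$ and let $Q=Q^d(p^e)$. By Lemma \ref{lem:EqmQ>0}, $Q>0$ and $n^e>0$, so LRCE condition (ii) in Definition \ref{def:LRCE} holds with equality, $\int_{\Theta}V(p^e,\theta)\nu(d\theta)=\kappa$, matching condition (ii) in Definition \ref{def:supply}. LRCE condition (i) combined with the choice of $Q$ gives $Q=Q^s(p^e;n^e,m^e)$, matching condition (i) of the supply definition with witnesses $n=n^e>0$ and $m=m^e$. LRCE condition (iii) is verbatim the optimal exit condition in the supply definition. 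Hence $p^e$ is \emph{a} price satisfying the conditions in Definition \ref{def:supply} at $Q=Q^d(p^e)$, and since the supply function assigns a \emph{unique} such price, $p^e=P_{LR}^{s}(Q^d(p^e))$.

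For the ``if'' direction, I would suppose $Q^d(p^e)>0$ and $p^e=P_{LR}^{s}(Q^d(p^e))$. By Definition \ref{def:supply}, there exist $n>0$ and $m\in\Theta$ such that (i)--(iii) of that definition hold at $p^e$ with $Q=Q^d(p^e)$. Setting $n^e=n$ and $m^e=m$, condition (i) of the supply definition combined with $Q=Q^d(p^e)$ yields market clearing, condition (i) of Definition \ref{def:LRCE}. The equality $\int_{\Theta}V(p^e,\theta)\nu(d\theta)=\kappa$ from condition (ii) trivially implies condition (ii) of the LRCE definition (the ``$\le\kappa$, with equality if $n^e>0$'' requirement is satisfied since equality holds). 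The optimal exit condition transfers verbatim. Thus $\langle p^e,n^e,m^e\rangle$ is an LRCE.

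There is no real obstacle here; the only point that requires care is noting the direction in which Lemma \ref{lem:EqmQ>0} is used, namely to convert the one-sided entry inequality of the LRCE definition into the equality used to define $P_{LR}^{s}$. The uniqueness clause built into Definition \ref{def:supply} is what allows the ``only if'' direction to conclude equality rather than mere membership in a set of candidate prices.
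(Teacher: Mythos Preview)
Your proposal is correct and matches the paper's own reasoning exactly: the paper states that the proposition ``follows immediately from the definitions and from Lemma \ref{lem:EqmQ>0}'s implication that the entry condition in Definition \ref{def:supply} holds with equality in equilibrium,'' which is precisely the argument you spell out. Your write-up is in fact more detailed than the paper's, but the logical content is identical.
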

\bigskip{}

Proposition \ref{prop:D=00003DS} simply says that the LRCE price
is such that supply equals demand. When firms are identical, it is
well known that the long-run industry supply function is horizontal
at the minimum point of the average cost function. Our objective is
to characterize this function for the environment described in this
section, where firms are heterogeneous.

\subsection{\label{sec:example}A simple example}

We discuss an example that is simple enough to be taught in introductory
courses and conveys much (but not all) of the intuition behind our
results. We assume that: (i) there are only two types, not a continuum,
$\theta_{H}>\theta_{L}\geq0$, and each type is equally likely to
be drawn by an entrant; (ii) $C(q,\theta)=c(q)+\theta$, so that a
firm's type represents its fixed cost and all firms have the same
marginal cost $MC(q)\equiv c'(q)$; (iii) the entry cost is zero,
$\kappa=0$; (iv) types are permanent, so that a firm keeps the type
it draws upon entry for its entire lifetime; and (v) firms are impatient,
$\delta<1$. The variable cost function $c(\cdot)$ satisfies the
following conditions: $c(0)=0$, $c'(0)=0$, $c'(q)>0$ and $c''(q)>0$
for all $q>0$, and $\lim_{q\rightarrow\infty}c'(q)=\infty$.

$\textsc{steady-state measure of types.}$ It is easy to see that
type $\theta_{L}$ will stay and type $\theta_{H}$ will exit in equilibrium;
in particular, we will drop $m$ from the notation.\footnote{For the free entry condition to hold for $p>0$, the profit of type
$\theta_{H}$ must be negative. Because types are permanent, type
$\theta_{H}$ will find it optimal to exit.} The steady-state mass of firms of type $\theta_{L}$, denoted by
$\mu_{L}$, is determined by the steady-state mass of entrants, $n$,
as follows:
\begin{equation}
\mu_{L}=n/2+\mu_{L}(1-\rho).\label{eq:ststL}
\end{equation}
The RHS of equation (\ref{eq:ststL}) is the sum of the mass of entrants
of type $\theta_{L}$, $n/2$, and the mass of firms of type $\theta_{L}$
that were already present and did not exit exogenously, $\mu_{L}(1-\rho)$.
The equation implies that, in steady state, the mass of type $\theta_{L}$
remains constant. For firms of type $\theta_{H}$, who never stay
for more than one period, their mass is half the mass of entrants.
Thus, the steady-state masses of firms of each type as a function
of the mass of entrants, $n$, are 
\[
\mu_{L}(n)=n/(2\rho)\,\,\,\,\,\mbox{and}\,\,\,\,\,\mu_{H}(n)=n/2.
\]

$\textsc{long-run industry supply function}.$ The conditions in the
definition of the long-run supply function become:\smallskip{}

(i) $Q=(\mu_{L}(n)+\mu_{H}(n))q(p)>0$.

(ii) (Free entry) $NPV(p)\equiv\frac{1}{2}\pi(p,\theta_{L})/(1-\delta(1-\rho))+\frac{1}{2}\pi(p,\theta_{H})=0$.

\smallskip{}

Condition (i) requires aggregate output supply to equal $Q$. Condition
(ii) requires that the net present value of an entrant is zero. With
probability $1/2$, a firm is of type $\theta_{L}$ and remains in
the market until it has to exogenously exit, thus expecting a net
present value of $\pi(p,\theta_{L})/(1-\delta(1-\rho))$. With probability
$1/2$, a firm is of type $\theta_{H}$, makes profit $\pi(p,\theta_{H})$,
and exits the market.

The weights on the profit functions of each type in the free-entry
condition have an intuitive interpretation. The weight $\Lambda_{L}\equiv1/(2(1-\delta(1-\rho))$
on $\pi(p,\theta_{L})$ is equal to the steady-state mass of type
$\theta_{L}$, normalized by the mass of entrants $n$, in a hypothetical
world where firms, instead of exiting with probability $\rho$, exit
with probability $1-\delta(1-\rho)$.\footnote{Formally, $\Lambda_{L}\equiv\mu_{E}(n,\delta)(\theta_{L})/n$, where
$\mu_{E}(n,\delta)(\theta_{L})$ solves $\mu_{E}(n,\delta)(\theta_{L})=n/2+\mu_{E}(n,\delta)(\theta_{L})\delta(1-\rho)$.} The hypothetical and actual probabilities of exit coincide as $\delta\rightarrow1$,
and so the weight asymptotically equals the actual, normalized steady-state
mass of type $\theta_{L}$. Similarly, the weight $\Lambda_{H}\equiv1/2$
on $\pi(p,\theta_{H})$ is equal to the normalized steady-state mass
of firms of type $\theta_{H}$ (here, $\delta$ is irrelevant because
type $\theta_{H}$ exits with probability 1). Thus, the net present
value of entry can be written as 
\begin{align}
NPV(p) & =\Lambda_{L}\pi(p,\theta_{L})+\Lambda_{H}\pi(p,\theta_{H})\nonumber \\
 & =pq(p)(\Lambda_{L}+\Lambda_{H})-(\Lambda_{L}C(q(p),\theta_{L})+\Lambda_{H}C(q(p),\theta_{H})).\label{eq:NPV_new}
\end{align}

By equation (\ref{eq:NPV_new}), the solution $p^{e}$ to $NPV(p^{e})=0$
satisfies 
\begin{equation}
p^{e}=AC^{e}(q(p^{e}),\Lambda)\equiv\frac{\Lambda_{L}AC(q(p^{e}),\theta_{L})+\Lambda_{H}AC(q(p^{e}),\theta_{H})}{(\Lambda_{L}+\Lambda_{H})},\label{eq:p=00003DAC^e}
\end{equation}
where $\Lambda\equiv(\Lambda_{L},\Lambda_{H})$, $AC(q,\theta)\equiv C(q,\theta)/q$
is the average cost of type $\theta$, and $q\mapsto AC^{e}(q,\Lambda)$
is a weighted average cost function.

By profit maximization, $p^{e}=MC(q(p^{e}))$, and so (\ref{eq:p=00003DAC^e})
implies that $p^{e}$ equalizes marginal and weighted average cost,
\begin{equation}
p^{e}=MC(q(p^{e}))=AC^{e}(q(p^{e}),\Lambda).\label{eq:p=00003DAC(p)}
\end{equation}

The left panel of Figure \ref{fig:exampleLRCE} illustrates how to
find $p^{e}$. The figure plots the marginal cost function common
to all types, $MC(\cdot)$, the average cost function for each type,
$AC(\cdot,\theta)$, and the weighted average cost function $AC^{e}(\cdot,\Lambda)$.
The zero-profit price $p^{e}$ is given by the intersection of the
marginal cost and weighted average cost functions, and this intersection
occurs at the minimum point on the weighted average cost function.\footnote{For a proof that the intersection occurs at the minimum point of $AC^{e}(\cdot,\Lambda)$,
note that the first order condition for the problem $\min_{q}AC^{e}(q,\Lambda)$
is precisely the condition $MC(q)=AC^{e}(q,\Lambda)$. Moreover, the
second order condition is satisfied because $c''(q)>0$ for all $q>0$.} Therefore, $q(p^{e})=q_{min}^{e}\equiv\arg\min_{q}\,\,AC^{e}(q,\Lambda)$
and the zero-profit price $p^{e}$ is
\[
p^{e}=AC^{e}(q_{min}^{e},\Lambda)=\min_{q}AC^{e}(q,\Lambda).
\]

\begin{figure}
\begin{centering}
	\newpsobject{showgrid}{psgrid}{subgriddiv=1,griddots=10,gridlabels=7pt,gridcolor=red} 
		\psset{unit=6mm} 	
	\psset{algebraic=true} 	
	\begin{pspicture}(0,0)(23,10) 

	\psline[linewidth=0.05]{->}(0,0)(10,0) 	
	\psline[linewidth=0.05]{->}(0,0)(0,10) 	
\rput(9,-0.5){quantity} 	
	\rput(-1.2,9){price \&} 	
	\rput(-1.1,8.5){costs}	 	
\psframe*[linecolor=pink, 		fillcolor=pink, 		fillstyle=vlines](0,4.70)(5.40,3.40) 	
	\rput(3.2,4.3){\red{\small{ $\pi^{e}>0$}}} 	
	
\psline(0,0)(8,7) 		
\rput(8,7.5){{ $MC$}} 	
	\pscurve[linestyle=dashed,linecolor=gray](0.2,2.5)(0.7,2)(5.7,2.5) 	
	\rput(6,2){{\gray{ $AC(\cdot,\theta_{L})$   }}}    	
		\pscurve(0.5,7.5)(2.5,3.5)(7.5,4) 	
	\rput(8,3.5){{{ $AC^{\ast}$   }}}    	
	
		\pscurve(2,9)(4,5)(9,5.5) 
		\rput(9.5,5){{{ $AC^{e}(\cdot,\Lambda)$   }}}    	
	
		\pscurve[linestyle=dashed,linecolor=gray](3.8,10.5)(5.8,6.5)(10.8,7) 		
\rput(4.5,11){{\gray{ $AC(\cdot,\theta_{H})$   }}}      	
	
	\psline[linestyle=dotted](0,3.20)(17.4,3.20)		 	
	\psline[linestyle=dotted](0,4.7)(15.5,4.7) 	
	\rput(-0.5,4.7){{ $p^{e}$}}		 	
		\psline[linestyle=dotted](5.4,0)(5.40,4.7)		
	 		\psline[linestyle=dotted](3.7,0)(3.7,3.2)		 	
	
		\rput(3.60,-0.5){{ $q^{\ast}_{min}$}}  	
	\rput(5.50,-0.5){{ $q^{e}_{min}$}}

\psline[linewidth=0.05]{->}(12,0)(22,0) 	
	\psline[linewidth=0.05]{->}(12,0)(12,10) 	
		\rput(11.2,9){price} 	
	
	\pscurve(14,8.1)(16,4.1)(21,2.1) 	
	\rput(14,8.5){$P^{d}$}  
	
\psline(12,4.7)(21,4.7) 	
	\rput(21.6,5){$S_{LR}$}

\psline[linestyle=dotted](15.43,0)(15.43,4.70)  	
	\psline[linestyle=dotted](17.43,0)(17.43,3.20) 
	
\rput(15.43,-0.5){$Q^{e}$}  	
	\rput(17.43,-0.5){$Q^{\ast}$}  	
	\rput(22.1,-0.5){aggregate}  	
	\rput(22.1,-1.0){~quantity}  	
	
		\end{pspicture}
\par\end{centering}
\bigskip{}

\bigskip{}

\caption{\label{fig:exampleLRCE}Long-run competitive equilibrium in the example.\protect \\
\emph{\footnotesize{}The left panel shows the marginal cost curve
(which is the same for all firms) and the average cost curve for high
and low types. It also shows the average of the average cost curves,
$AC^{e}$, when the average is taken with respect to the ex-ante measure
of types as perceived by an entrant who discounts the future. The
equilibrium price $p^{e}$ is given by the minimum of the $AC^{e}$
curve, where potential entrants make zero profits and produce $q_{min}^{e}$.
The left panel also shows the average of the average cost curves,
$AC^{*}$, when the average is taken with respect to the actual steady-state
measure of firms. Selection in exit implies that low-cost firms stay
and high-cost firms exit, and so $AC^{*}$ lies below $AC^{e}$. In
particular, actual firms operating in the steady-state make positive
profits, as shown by the rectangular area $\pi^{e}>0$. The right
panel plots the industry demand and the long-run supply function $S_{LR}$,
which is a horizontal line at price $p^{e}$. The equilibrium aggregate
quantity, $Q^{e}$, can be found by evaluating the demand function
at the equilibrium price $p^{e}$. A planner that maximizes steady-state
surplus would instead choose individual quantity $q_{min}^{*}$ that
minimizes $AC^{*}$, more entry, and a higher aggregate quantity,
$Q^{*}$.}}
\end{figure}

Finally, it is straightforward to check that, since $p^{e}>0$, there
exists $n(Q)>0$ satisfying condition (i) in Definition \ref{def:supply},
i.e., $Q=(\mu_{L}(n(Q))+\mu_{H}(n(Q)))q(p^{e})$. Therefore, the
long-run supply function exists and is horizontal at the price that
minimizes the weighted average cost function $AC^{e}(\cdot)$. Thus,
provided that $P^{d}(0)>p^{e}$, there exists a unique LRCE where
price is $p^{e}$ and the mass of entrants $n^{e}$ is such that the
product market clears, i.e., $Q^{d}(p^{e})=(\mu_{L}(n^{e})+\mu_{H}(n^{e}))q_{min}^{e}$.\footnote{The solution is unique and given by $n^{e}=Q^{d}(p^{e})/((1/2\rho+1/2)q_{min}^{e})$.}

Figure \ref{fig:exampleLRCE} also illustrates that aggregate profits
are strictly positive in an LRCE. The equilibrium profit of the average
firm is $\pi^{e}\equiv(p^{e}-AC^{*}(q_{min}^{e}))q_{min}^{e}>0$,
where $q_{min}^{e}$ is the quantity produced by each firm and

\[
AC^{*}(\cdot)\equiv\frac{(1/(2\rho))AC(\cdot,\theta_{L})+(1/2)AC(\cdot,\theta_{H})}{((1/(2\rho))+1/2)}
\]
is the per-unit cost function of the average firm producing in equilibrium.
The weights in $AC^{*}(\cdot)$ correspond to the steady-state proportion
of firms of each type. While these weights converge to $\Lambda$
as $\delta\rightarrow1$, for the case $\delta<1$, $AC^{*}(\cdot)$
puts more weight on the low cost type relative to $AC^{e}(\cdot,\Lambda)$.
Intuitively, the selection in exit implies that the steady-state composition
of firms is tilted towards low-cost firms relative to the ex-ante
perception of a potential entrant who discounts the future. Thus,
while potential entrants make zero profits ex-ante, the actual firms
operating in the steady state make strictly positive profits.

Consequently, a planner who wishes to maximize steady-state surplus
prefers a higher aggregate quantity $Q^{*}$, a lower quantity per
firm $q_{min}^{*}$, and a higher mass of entrants $n^{*}$ compared
to the LRCE quantities $Q^{e}$, $q_{min}^{e}$, and $n^{e}$. The
planner's preferred outcome is not an equilibrium outcome, because
the net present value of entry would be negative and firms would not
enter to begin with. Of course, a planner may not want to maximize
steady-state surplus, preferring alternatively to take the entire
path into account. While the equilibrium does not maximize the steady-state
value of the total surplus, it does maximize the net present value
of total surplus in \citet{hopenhayn1992entry}'s dynamic model. But
this example highlights that, when firms are heterogeneous, one has
to be explicit about the planner's objective.\footnote{This point is analogous to the result that the Ramsey model does not
deliver the golden-rule rate of saving. See \citet{atkeson2005modeling}
for a similar point in the context of a model used to study the life
cycle of manufacturing plants.}

In the special case where there is a single type, $\theta_{L}=\theta_{H}$,
the standard textbook results hold: The industry supply function is
horizontal at the price that equals the minimum of the average cost
function (all firms have the same cost function), each firm makes
zero profits, and aggregate surplus is maximized in an LRCE (irrespective
of the value of the discount factor $\delta$). Alternatively, we
can interpret the standard textbook model as a case where firms are
of different types but know their types before entering the market.
In that case, only firms of type $\theta_{L}$ will operate in the
market in an LRCE.

$\textsc{beyond the simple example.}$ We extend the logic in the
example in several directions. First, marginal costs may differ by
type. We will tackle this case by expressing the average cost function
in terms of price, not quantity. To anticipate how things would change,
consider the more general case where the optimal quantity may differ
by type due to different marginal costs. Equation (\ref{eq:p=00003DAC^e})
becomes 
\begin{equation}
p^{e}=\frac{\Lambda_{L}C(q(p^{e},\theta_{L}),\theta_{L})+\Lambda_{H}C(q(p^{e},\theta_{H}),\theta_{H})}{\Lambda_{L}q(p^{e},\theta_{L})+\Lambda_{L}q(p^{e},\theta_{H})}.\label{eq:diffMC}
\end{equation}
Since $q(p^{e},\theta_{L})\neq q(p^{e},\theta_{H})$, it follows that
the RHS of (\ref{eq:diffMC}) is no longer the average of the average
costs, as it was in expression (\ref{eq:p=00003DAC^e}). But one could
still think of it as a type of average cost function where the argument
is price and not quantity. To see this point, note that the numerator
of the RHS is a weighted cost function; denote it by $\bar{C}$. Also,
the denominator is a weighted quantity; denote it by $\bar{q}$. Letting
$\bar{AC}=\bar{C}/\bar{q}$ denote the average weighted cost function,
it follows that (\ref{eq:diffMC}) can be rewritten as $p^{e}=\bar{AC}(p^{e})$.
Together with the fact that firms choose quantities to equate price
to marginal cost, one can use simple algebra to show that the equilibrium
price $p^{e}$ minimizes the $\bar{AC}$ curve.\footnote{This is true because the condition for minimization of $\bar{AC}$
is $\bar{C}'(p)/\bar{q}'(p)=\bar{AC}(p)$ and the fact that $p=dc(q,\theta)/dq$
for all types implies that $\bar{C}'(p)/\bar{q}'(p)=p$.} In the next section, we will show that this is true in more general
cases.

The second extension we tackle is that types may be non-permanent.
When types follow a more general Markov process, optimal entry decisions
are the solution to a non-trivial dynamic optimization problem. We
will use results from the theory of bounded linear operators to show
that, nevertheless, ex-ante expected profits can still be expressed
as the weighted average of the profits of each type. Third, there
may be a continuum of non-permanent types. In this case, exit decisions
also need to be characterized by solving a dynamic optimization problem.
Fourth, strictly positive entry costs need to be incorporated into
the definition of average cost.

\section{\label{sec:charact}Characterization of long-run industry supply}

To state the main result, we first define an average weighted cost
function. Letting $\mathcal{M}(\Theta)$ be the space of finite Borel
measures that are absolutely continuous with respect to Lebesgue,
we define $\bar{C}:[0,\infty)\times\mathcal{M}(\Theta)\rightarrow[0,\infty)$
as
\[
\bar{C}(p,\eta)=\int_{\Theta}C(q(p,\theta),\theta)\eta(d\theta)+\kappa
\]
for all $p\geq0$ and $\eta\in\mathcal{M}(\Theta)$. This is the weighted
cost with respect to a measure $\eta$. Similarly, let $\bar{q}:[0,\infty)\times\mathcal{M}(\Theta)\rightarrow[0,\infty)$
be defined by
\[
\bar{q}(p,\eta)=\int_{\Theta}q(p,\theta)\eta(d\theta).
\]

The corresponding \emph{average weighted cost} function is then defined
by
\[
\bar{AC}(p,\eta)\equiv\bar{C}(p,\eta)/\bar{q}(p,\eta),
\]
provided that $\bar{q}(p,\eta)>0$.\footnote{If $\bar{q}(p,\eta)=0$, we define $\bar{AC}(p,\eta)=\infty$. }
In the case where marginal costs are identical, the average weighted
cost coincides with the weighted average cost, as in the example,
but this is not true in general.

Next, for each $n$, $m$, and $\delta$, we define $\mu_{E}(n,m,\delta)\in\mathcal{M}(\Theta)$
to be the steady-state measure of types of firms when the mass of
entrants is $n$, firms survive with exogenous probability $\delta(1-\rho)$,
and surviving firms exit endogenously if their type is lower than
$m\in\Theta$, i.e., for any Borel set $A\subseteq\Theta$,
\[
\mu_{E}(n,m,\delta)(A)=\nu(A)n+\delta(1-\rho)\int_{\theta_{L}}^{m}F(A\mid\theta)\mu_{E}(n,m,\delta)(d\theta).
\]
For the special case of $\delta=1$, $\mu_{E}(n,m,1)=\mu(n,m)$ is
the \emph{actual} steady-state measure of types defined in equation
(\ref{eq:defMU}), because in the model firms survive with exogenous
probability $1-\rho$, not $\delta(1-\rho)$.

Finally, since $\mu_{E}$ is linear in $n$, we define the normalized
mass
\[
\Lambda(m,\delta)\equiv\mu_{E}(n,m,\delta)/n\in\mathcal{M}(\Theta).
\]

We now state the main result.

\bigskip{}

\begin{thm}
\label{thm:main_charact}The long-run industry supply function exists
and it is given by 
\[
P_{LR}^{S}(Q)=\min_{p,m}\bar{AC}(p,\Lambda(m,\delta))
\]
for any $Q>0$.
\end{thm}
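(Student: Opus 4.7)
The plan is to reduce the three conditions defining $P_{LR}^S$ in Definition \ref{def:supply} to the stated minimization via a single operator-theoretic identity that rewrites expected entry value in terms of $\bar{C}$ and $\bar{q}$. For a fixed price $p$ and candidate cutoff $m \in \Theta$, let $V_m(p,\cdot)$ be the value of a firm under the stationary threshold policy ``exit iff current type $\geq m$,'' so that $V_m$ solves the linear equation $V_m = \pi(p,\cdot) + T_m V_m$, where $(T_m \phi)(\theta) = \delta(1-\rho) 1_{\{\theta < m\}} \int_\Theta \phi(\theta') F(d\theta'|\theta)$. On bounded Borel functions $T_m$ has operator norm at most $\delta(1-\rho) < 1$, so $V_m = \sum_{k\geq 0} T_m^k \pi(p,\cdot)$. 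The adjoint $T_m^\ast$ on finite Borel measures coincides with the operator whose fixed-point equation defines $\mu_E(n,m,\delta)$, so that $\Lambda(m,\delta) = \sum_{k\geq 0} (T_m^\ast)^k \nu$. Pairing the two Neumann series term by term yields the key identity
\[
\int_\Theta V_m(p,\theta)\nu(d\theta) = \int_\Theta \pi(p,\theta)\Lambda(m,\delta)(d\theta).
\]

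Inserting $\pi(p,\theta) = pq(p,\theta) - C(q(p,\theta),\theta)$ and using the definitions of $\bar C$ and $\bar q$, the free-entry condition $\int V_m\,\nu = \kappa$ collapses to $p = \bar{AC}(p,\Lambda(m,\delta))$ whenever $\bar q > 0$. For fixed $\eta = \Lambda(m,\delta)$, the envelope identity $\partial_p C(q(p,\theta),\theta) = p\, q_p(p,\theta)$ (from first-order profit maximization) gives
\[
\partial_p \bar{AC}(p,\eta) = \frac{\bar q_p(p,\eta)}{\bar q(p,\eta)}\bigl(p - \bar{AC}(p,\eta)\bigr),
\]
and $\bar q_p > 0$ by Assumption \ref{assu:costs}. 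Hence $\bar{AC}(\cdot,\eta)$ is strictly decreasing wherever it exceeds the 45-degree line and strictly increasing wherever it lies below it, so its unique fixed point $\hat p(m)$ is also its unique minimizer: $\min_p \bar{AC}(p,\Lambda(m,\delta)) = \hat p(m)$, the free-entry price at cutoff $m$.

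For the outer minimization, set $\phi(p,m) = \int \pi(p,\theta)\Lambda(m,\delta)(d\theta)$; by the same envelope computation $\partial_p \phi = \bar q > 0$, so $\phi(p,m) \geq \kappa$ if and only if $p \geq \hat p(m)$. A standard optimal-stopping argument, using Assumption \ref{assu:FOSD} to ensure that $\theta \mapsto \int V_m(p,\theta') F(d\theta'|\theta)$ is decreasing and hence the optimal exit policy has threshold form, identifies the indifference cutoff $m^e$ in Definition \ref{def:LRCE}(iii) with the maximizer of $V_m(p,\cdot)$, and hence of $\phi(p,\cdot)$. The joint free-entry and optimal-exit requirement then reads $\max_m \phi(p,m) = \kappa$, equivalently $p = \min_m \hat p(m) = \min_{p,m}\bar{AC}(p,\Lambda(m,\delta))$. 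Uniqueness of this $p^e$ follows from the monotonicity above; choosing $n^e = Q/\bar q(p^e,\Lambda(m^e,\delta)) > 0$ satisfies the market-clearing condition (i) in Definition \ref{def:supply}, completing existence and the formula.

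The most delicate step is the operator identity: one must choose function/measure spaces compatible with Assumption \ref{assu:CDFs_cont} (so that $T_m$ is well-defined and bounded), verify convergence of the Neumann series (immediate from $\delta(1-\rho) < 1$), and justify the term-by-term exchange of pairing with $\nu$ via Fubini--Tonelli. A secondary technicality is the identification of the indifference cutoff with the maximizer of $\phi(p,\cdot)$, which relies on the strict FOSD in Assumption \ref{assu:FOSD} to give strict monotonicity of the continuation value and hence a uniquely determined threshold.
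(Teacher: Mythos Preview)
Your proposal is essentially correct and shares the paper's core device: the Neumann-series/adjoint identity that converts $\int V_m(p,\cdot)\,d\nu$ into $\bar\pi(p,\Lambda(m,\delta))$. Where you diverge is in the two minimizations. For the inner one (over $p$), you differentiate $\bar{AC}$ using the envelope identity $\partial_p C(q(p,\theta),\theta)=p\,q_p$; the paper instead uses a revealed-preference inequality (your display would be their equation~\eqref{eq:MC_discrete}) that avoids differentiability of $q(\cdot,\theta)$ at the kink $p=C'(0,\theta)$. For the outer one (over $m$), your argument is genuinely more direct: since the optimal threshold $m^*(p)$ gives $V_{m^*}=V\ge V_m$ pointwise, conditions (ii)--(iii) of Definition~\ref{def:supply} collapse to $\max_m\phi(p,m)=\kappa$, and strict monotonicity in $p$ yields $p^e=\min_m\hat p(m)$. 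The paper instead introduces a second weighted measure $\Lambda_X(m,\delta)$ (built from $F(\cdot\mid m)$ rather than $\nu$), rewrites the exit condition as $\bar\pi(p,\Lambda_X(m,\delta))=0$, and shows via a separate lemma that the entry and exit schedules in $(p,m)$-space cross at the minimum of the former. Your route bypasses $\Lambda_X$ entirely; the paper's route, however, delivers the two-curve diagram used later for comparative statics (Figure~\ref{fig:eqmvsplanner}), which your argument does not produce.

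One slip to fix: in your last step you set $n^e=Q/\bar q(p^e,\Lambda(m^e,\delta))$, but market clearing in Definition~\ref{def:supply}(i) uses the \emph{actual} steady-state measure $\mu(n,m)=n\Lambda(m,1)$ (exogenous survival $1-\rho$, not $\delta(1-\rho)$), so the correct expression is $n^e=Q/\bar q(p^e,\Lambda(m^e,1))$. This does not affect the characterization of $P^S_{LR}$, but you should also verify $\bar q(p^e,\Lambda(m^e,1))>0$, which follows since $\Lambda(m,1)\geq\Lambda(m,\delta)$ as measures. Finally, your envelope argument and the claim $\bar q_p>0$ need the qualification that they hold on $\{p:\bar q(p,\Lambda(m,\delta))>0\}$; outside that set $\bar{AC}=\infty$ and the minimization is trivial there, but the transition deserves a sentence.
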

\bigskip{}

Theorem \ref{thm:main_charact} extends the textbook characterization
of the long-run supply function to a setting with heterogeneous firms.
The long-run supply function is horizontal at a price that minimizes
the average weighted cost function, where the minimum is with respect
to both price and the marginal type. The average weight cost function
is constructed using the measure $\Lambda(m,\delta)$, which can be
viewed as the normalized steady-state cross-sectional distribution
of firm types in a hypothetical world where firms survive with exogenous
probability $\delta(1-\rho)$ and surviving firms exit endogenously
if their type is lower than $m$.

In particular, Theorem \ref{thm:main_charact} formalizes Marshall's
notion of a representative firm as a hypothetical firm with average
cost function $\bar{AC}$. In the special case where all firms have
identical marginal cost functions (as in the example), the average
cost function of the representative firm, $\bar{AC}$, corresponds
to a weighted average of the average cost functions.

\bigskip{}

\begin{cor}
\label{cor:uniqueness}There exists a unique LRCE and it is characterized
by positive entry and positive aggregate production.
\end{cor}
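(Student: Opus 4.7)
The plan is to chain Theorem \ref{thm:main_charact}, Proposition \ref{prop:D=00003DS}, and Lemma \ref{lem:EqmQ>0}, with one short extra step ruling out $\bar{p}\geq v$. By Theorem \ref{thm:main_charact}, the long-run inverse supply function exists and is the constant
\[
\bar{p} \;:=\; \min_{p,m}\bar{AC}(p,\Lambda(m,\delta))
\]
independent of $Q>0$. In particular, the definition of $P^{s}_{LR}$ implies that at $\bar{p}$ the free-entry equality $\int_{\Theta} V(\bar{p},\theta)\,\nu(d\theta)=\kappa$ is satisfied by some $m^{*}\in\Theta$ and some $n^{*}>0$.

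Next I would show $\bar{p}<v$ by contradiction. If $\bar{p}\geq v$, then monotonicity of $\pi$ in $p$ and $\theta$, together with Assumption \ref{assu:prod>0}, yields $\pi(\bar{p},\theta)\geq \pi(v,\theta_{H})>\kappa$ for every $\theta\in\Theta$. Since $V(\bar{p},\theta)\geq \pi(\bar{p},\theta)$, integrating against $\nu$ gives $\int V(\bar{p},\theta)\,\nu(d\theta)>\kappa$, contradicting the free-entry equality above. Hence $\bar{p}<v$, so by Assumption \ref{assu:QdW}, $Q^{d}(\bar{p})>0$. Proposition \ref{prop:D=00003DS} now says that $p^{e}$ is an LRCE price if and only if $p^{e}=\bar{p}$, so the equilibrium price exists and is unique. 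Positive entry and aggregate production in any LRCE then follow from Lemma \ref{lem:EqmQ>0}.

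It remains to pin down $m^{e}$ and $n^{e}$. From the Bellman equation in (\ref{eq:Bellman}), together with Assumption \ref{assu:FOSD} and the fact that $\pi(\bar{p},\cdot)$ is non-increasing, $V(\bar{p},\cdot)$ is non-increasing in $\theta$, strictly so on the set of types that produce a positive quantity. Assumption \ref{assu:FOSD} therefore makes $m\mapsto \int_{\Theta} V(\bar{p},\theta')F(d\theta'\mid m)$ strictly decreasing, so the sign conditions in Definition \ref{def:LRCE}(iii) admit a unique $m^{e}\in\Theta$, which must coincide with the $m^{*}$ appearing in the supply-function characterization. Given $m^{e}$, the map $n\mapsto Q^{s}(\bar{p};n,m^{e})$ is a strictly increasing linear function of $n$ because $\mu(n,m^{e})$ is linear in $n$; market clearing thus determines a unique $n^{e}>0$.

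The main obstacle is the step $\bar{p}<v$: it is short, but it is where Assumption \ref{assu:prod>0} enters in a non-redundant way to rule out the degenerate high-price equilibrium in which demand vanishes. A secondary difficulty is verifying strict monotonicity of $m\mapsto \int V(\bar{p},\theta')F(d\theta'\mid m)$, since in the continuum-of-types setting one must ensure that $V(\bar{p},\cdot)$ is strictly decreasing on a set of positive $F(\cdot\mid m)$-mass, which ultimately relies on the pdf support condition in Assumption \ref{assu:CDFs_cont} together with the production of positive quantity by a non-trivial range of active types.
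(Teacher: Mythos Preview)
Your proof is correct and follows the same approach as the paper: chain Theorem \ref{thm:main_charact}, Proposition \ref{prop:D=00003DS}, and the observation (via Assumption \ref{assu:prod>0} and monotonicity of $\pi$) that the constant long-run supply price lies strictly below $v$; Lemma \ref{lem:EqmQ>0} then delivers positive entry and production. The only difference is that the paper's one-line proof leaves uniqueness of $m^{e}$ and $n^{e}$ implicit in the proof of Theorem \ref{thm:main_charact} (specifically, Lemma \ref{lemm:minAC} already pins down a unique $(p^{e},m^{e})$ as the minimizer of $\bar{AC}$), whereas you re-derive uniqueness of $m^{e}$ directly from strict monotonicity of $m\mapsto \int V(\bar{p},\theta')F(d\theta'\mid m)$ and of $n^{e}$ from linearity of $\mu(n,m^{e})$ in $n$; both routes are valid, and your extra detail is a useful complement to the paper's terse statement.
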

\begin{proof}
Follows immediately from Proposition \ref{prop:D=00003DS}, Theorem
\ref{thm:main_charact}, and the fact that assumption \ref{assu:prod>0}
and monotonicity of $\pi(\cdot,\theta)$ imply that $\min_{p,m}\bar{AC}(p,\Lambda(m,\delta))<v$.
\end{proof}

\subsection{\label{sec:proof}Proof of Theorem \ref{thm:main_charact}}

We will show that there is a unique solution $(p^{e},m^{e})$ to equations

(ii) $\int_{\Theta}V(p,\theta)\nu(d\theta)=\kappa$, and

(iii) $\int_{\Theta}V(p,\theta')F(d\theta'\mid m)=0$ if $m\in(\theta_{L},\theta_{H})$,
$\geq0$ if $m=\theta_{H}$, and $\leq0$ if $m=\theta_{L}$

\hspace{-.65cm}in Definition \ref{def:supply}, and that this solution
satisfies
\[
(p^{e},m^{e})=\min_{p,m}\bar{AC}(p,\Lambda(m,\delta)).
\]
The proof has three steps. Throughout the proof, we let $\varrho\equiv\delta(1-\rho)$.

\bigskip{}

$\textsc{step 1}.$ For any $(p,\theta)\in\mathbb{R}_{+}\times\Theta$
and $m\in\Theta$, let 
\begin{equation}
V_{m}(p,\theta)=\pi(p,\theta)+\varrho T_{m}[V_{m}(p,\cdot)](\theta)\label{eq:V_m}
\end{equation}
where $T_{m}[g](\theta)=1\{\theta\leq m\}\int_{\Theta}g(\theta')F(d\theta'\mid\theta)$.
In words, $V_{m}$ differs from the value function $V$ defined in
equation (\ref{eq:Bellman}) in that it forces a possibly suboptimal
exit decision threshold $m$.

Consider the system of equations:\bigskip{}

(ii') $\int_{\Theta}V_{m}(p,\theta)\nu(d\theta)=\kappa$, and

(iii') $\int_{\Theta}V_{m}(p,\theta')F(d\theta'\mid m)=0$ if $m\in(\theta_{L},\theta_{H})$,
$\geq0$ if $m=\theta_{H}$, and $\leq0$ if $m=\theta_{L}$.

\bigskip{}

We will show that we can work with the system of equations (ii')-(iii')
rather than (ii)-(iii).
\begin{lem}
If $(p,m)$ is the unique solution to (ii')-(iii'), then $(p,m)$
must also be the unique solution to (ii)-(iii).
\end{lem}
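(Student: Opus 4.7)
The plan is to show that condition (iii') precisely identifies the threshold $m$ for which the forced-policy value $V_m(p,\cdot)$ coincides with the optimal value $V(p,\cdot)$; once this identification is established, the two systems (ii')--(iii') and (ii)--(iii) become the same (up to a null set), and uniqueness transfers automatically. Throughout I use that the operator $T[g](\theta)=\pi(p,\theta)+\varrho\max\{\int g(\theta')F(d\theta'\mid\theta),0\}$ and the operator $T_m[g](\theta)=\pi(p,\theta)+\varrho\,1\{\theta\leq m\}\int g(\theta')F(d\theta'\mid\theta)$ are contractions of modulus $\varrho<1$ on bounded functions (using $\rho>0$ from Assumption~\ref{assu:rho>0}), with unique fixed points $V(p,\cdot)$ and $V_m(p,\cdot)$ respectively.

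The key step is a lemma: if $(p,m)$ satisfies (iii'), then $V_m(p,\cdot)$ is also a fixed point of $T$, whence $V_m(p,\cdot)=V(p,\cdot)$. Comparing the defining equation (\ref{eq:V_m}) with (\ref{eq:Bellman}), this reduces to verifying
\[
1\{\theta\leq m\}\int_{\Theta}V_{m}(p,\theta')F(d\theta'\mid\theta)=\max\left\{\int_{\Theta}V_{m}(p,\theta')F(d\theta'\mid\theta),\,0\right\},
\]
i.e., that the conditional expectation is nonnegative for $\theta\leq m$ and nonpositive for $\theta>m$. The sign at $\theta=m$ is precisely what (iii') supplies, and I propagate it to all other $\theta$ by monotonicity: $\theta\mapsto\int V_{m}(p,\theta')F(d\theta'\mid\theta)$ is nonincreasing, which I obtain by first showing $V_{m}(p,\cdot)$ is nonincreasing on $[\theta_{L},m]$ (iterate (\ref{eq:V_m}) from a nonincreasing initial guess, using Assumption~\ref{assu:FOSD} and monotonicity of $\pi(p,\cdot)$ in $\theta$) and then applying FOSD once more to the integral.

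Given the lemma, the forward direction is immediate: if $(p,m)$ solves (ii')--(iii'), then $V_{m}=V$ at price $p$, so substituting $V$ for $V_{m}$ in (ii')--(iii') yields (ii)--(iii) at the same $(p,m)$. For the converse, suppose $(p',m')$ solves (ii)--(iii). Monotonicity of $V(p',\cdot)$ (noted in the text right after (\ref{eq:Bellman})), FOSD, and (iii) together imply that the threshold policy at $m'$ is optimal in the Bellman problem at price $p'$, so $V(p',\cdot)$ satisfies (\ref{eq:V_m}) with threshold $m'$; uniqueness of that fixed point yields $V(p',\cdot)=V_{m'}(p',\cdot)$. Hence $(p',m')$ satisfies (ii')--(iii'), and the hypothesised uniqueness of the solution to (ii')--(iii') forces $(p',m')=(p,m)$.

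The main obstacle will be the boundary cases $m\in\{\theta_{L},\theta_{H}\}$, where (iii') is a sign inequality rather than an equality and $V_{m}$ need not be monotone right at the threshold. I expect to absorb these by invoking Assumption~\ref{assu:CDFs_cont}: since $\nu$ and each $F(\cdot\mid\theta)$ are absolutely continuous with respect to Lebesgue, any single boundary type where $V_{m}$ might disagree with $V$ is a null set for every integral appearing in (ii)--(iii) and (ii')--(iii'), so the equivalence of the two systems is valid in exactly the sense required.
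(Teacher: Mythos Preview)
Your converse argument is exactly the paper's. For the forward direction, though, the paper takes a different route: instead of verifying directly that $V_m$ is a Bellman fixed point, it introduces the optimal exit threshold $m_0$ at price $p$, uses the one-shot deviation principle to get $V_{m_0}=V$ (so all needed monotonicity is inherited from $V$, already established right after \eqref{eq:Bellman}), observes that $m_0$ therefore satisfies (iii'), and then invokes uniqueness of the solution to (iii') at price $p$ to conclude $m=m_0$ and hence $V_m=V$. This is shorter precisely because it never needs monotonicity of $V_m$ for an arbitrary $m$.

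Your direct route has a gap at the monotonicity step. The iteration you describe does not preserve ``nonincreasing on $[\theta_L,m]$'': the operator $g\mapsto\pi(p,\cdot)+\varrho\,1\{\cdot\le m\}\int g\,dF(\cdot\mid\theta)$ involves an integral over the \emph{full} support $\Theta$, and FOSD (Assumption~\ref{assu:FOSD}) delivers monotonicity of that integral only when $g$ is nonincreasing on all of $\Theta$, not just on $[\theta_L,m]$. Nor does the iteration preserve ``nonincreasing on $\Theta$'': the indicator $1\{\cdot\le m\}$ creates a discontinuity at $m$ of size $\varrho\int g_n\,dF(\cdot\mid m)$, and along the iterates this quantity need not be nonnegative (e.g.\ already at the second iterate starting from $g_0=\pi$ you would need $\int\pi(p,\theta')F(d\theta'\mid m)\ge 0$, which is not assumed). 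Condition (iii') pins the sign of the jump \emph{at the fixed point} ($\int V_m\,dF(\cdot\mid m)=0$), but that does not control the iterates. The claim that $\theta\mapsto\int V_m(p,\theta')F(d\theta'\mid\theta)$ is decreasing is in fact true---the paper records it as property (2) in the proof of Lemma~\ref{lemm:minAC}---but it is not established by the one-line iteration you sketch, and the paper's $m_0$ argument sidesteps the need for it here.
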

\begin{proof}
Let $(p,m)$ be the unique solution to (ii')-(iii'). In particular,
$m$ is the unique solution to (iii') given $p$. Let $m_{0}$ be
the optimal exit threshold given $p$. In particular, $\int_{\Theta}V_{m_{0}}(p,\theta')F(d\theta'\mid m)=0$
if $m\in(\theta_{L},\theta_{H})$, $\geq0$ if $m=\theta_{H}$, and
$\leq0$ if $m=\theta_{L}$. Since $m$ is the unique solution to
(iii') given $p$, it follows that $m=m_{0}$ and, therefore, $V_{m}=V_{m_{0}}$.
In addition, by optimality of $m_{0}$ and the one-shot deviation
principle, $V_{m_{0}}=V$. Therefore, $(p,m)$ solves (ii)-(iii).
To show uniqueness, suppose that $(p',m')$ solves (ii)-(iii). Then
$V=V_{m'}$ and so $(p',m')$ must also solve (ii')-(iii'). But since
$(p,m)$ is the unique solution to (ii')-(iii'), it must be that $(p',m')=(p,m)$.
\end{proof}
\bigskip{}

$\textsc{step 2}.$ In this step, we will show that conditions (ii')-(iii')
can be equivalently expressed using weighted profit functions. This
is one of the main insights of the proof and it relies on the concept
of the adjoint of a bounded operator to identify the appropriate weight
over profit functions.

For each $m\in\Theta$, define an operator $\Phi_{m}:\mathcal{M}(\Theta)\rightarrow\mathcal{M}(\Theta)$
such that, for all $A\subseteq\Theta$ Borel,
\[
\Phi_{m}[\eta](A)=\int_{\theta_{L}}^{m}F(A\mid\tilde{\theta})\eta(d\tilde{\theta}).
\]
$\Phi_{m}[\eta]$ gives the measure of types that results from applying
the Markov operator $F$ to current types that are below the marginal
type $m$, when the measure of current types is $\eta$.

The next result collects two useful properties of the operator $\Phi_{m}$.

\bigskip{}

\begin{lem}
\label{lemma_Phi} (i) For any $\varrho\in[0,1)$ and $m\in\Theta$,
$\sum_{j=0}^{\infty}\varrho^{j}\Phi_{m}^{j}=\left(I-\varrho\Phi_{m}\right)^{-1}$
is a bounded operator from $\mathcal{M}(\Theta)$ to itself, where
$I$ is the identity operator; (ii) For all $j$, $\Phi_{m}^{j}$
is the adjoint operator of $T_{m}^{j}$.
\end{lem}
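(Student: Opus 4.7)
The plan is a standard Neumann-series argument for (i), combined with a Fubini-based duality identity for (ii). Throughout I equip $\mathcal{M}(\Theta)$ with the total-variation norm, and the pairing $\langle g,\eta\rangle\equiv\int_\Theta g\,d\eta$ between bounded measurable functions $g:\Theta\to\mathbb{R}$ and finite signed measures $\eta$ is the duality relative to which the adjoint in (ii) is taken.

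For part (i), I first check that $\Phi_m$ actually maps $\mathcal{M}(\Theta)$ into itself. By Assumption \ref{assu:CDFs_cont}, $F(\cdot\mid\tilde\theta)$ has a density $f(\cdot\mid\tilde\theta)$, so Fubini gives that $\Phi_m[\eta]$ is absolutely continuous with Lebesgue density $\theta'\mapsto\int_{\theta_L}^{m}f(\theta'\mid\tilde\theta)\,\eta(d\tilde\theta)$. Using the Jordan decomposition $\eta=\eta^+-\eta^-$ together with $F(\Theta\mid\tilde\theta)=1$, I then estimate
\[
\|\Phi_m[\eta]\|_{TV}\leq\Phi_m[\eta^+](\Theta)+\Phi_m[\eta^-](\Theta)\leq\eta^+(\Theta)+\eta^-(\Theta)=\|\eta\|_{TV},
\]
so the operator norm of $\Phi_m$ is at most $1$. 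Since $\varrho\in[0,1)$, the Neumann series $\sum_{j\geq 0}\varrho^{j}\Phi_m^{j}$ converges in operator norm and the usual computation $(I-\varrho\Phi_m)\sum_{j=0}^N\varrho^j\Phi_m^j=I-\varrho^{N+1}\Phi_m^{N+1}\to I$ identifies the sum with $(I-\varrho\Phi_m)^{-1}$, which is therefore a bounded operator on $\mathcal{M}(\Theta)$ with norm at most $1/(1-\varrho)$.

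For part (ii), I establish the adjoint identity first at $j=1$ by a single application of Fubini:
\begin{align*}
\langle T_m g,\eta\rangle &=\int_\Theta 1\{\theta\leq m\}\int_\Theta g(\theta')F(d\theta'\mid\theta)\,\eta(d\theta)\\
&=\int_\Theta g(\theta')\int_{\theta_L}^{m}F(d\theta'\mid\tilde\theta)\,\eta(d\tilde\theta)=\langle g,\Phi_m\eta\rangle,
\end{align*}
where the swap is justified because $g$ is bounded and $|\eta|$ is finite, so the double integral of $|g|$ against $1\{\theta\leq m\}F(d\theta'\mid\theta)\,|\eta|(d\theta)$ is finite. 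Iterating this identity yields $\langle T_m^j g,\eta\rangle=\langle T_m^{j-1}g,\Phi_m\eta\rangle=\cdots=\langle g,\Phi_m^j\eta\rangle$, so $\Phi_m^j$ is the adjoint of $T_m^j$ for every $j$.

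I do not anticipate a serious obstacle — both parts essentially package standard facts. The one point requiring some care is that $\Phi_m$ must stay inside the smaller space $\mathcal{M}(\Theta)$ of Lebesgue-absolutely-continuous measures (rather than only the space of all finite signed measures), and this is precisely what the transition density hypothesis in Assumption \ref{assu:CDFs_cont} secures via the explicit density formula above.
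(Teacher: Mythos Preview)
Your proof is correct and follows essentially the same route as the paper: a Neumann-series argument based on $\|\Phi_m\|\le 1$ for part (i), and Fubini plus iteration for part (ii). If anything, you are slightly more careful than the paper in explicitly verifying that $\Phi_m$ preserves the subspace $\mathcal{M}(\Theta)$ of Lebesgue-absolutely-continuous measures via the density formula from Assumption~\ref{assu:CDFs_cont}.
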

\begin{proof}
See the Appendix.
\end{proof}
\bigskip{}

Using the operator $\Phi_{m}$, $\mu_{E}$ can be alternatively written
as
\[
\mu{}_{E}(n,m,\delta)=\nu n+\varrho\Phi_{m}[\mu{}_{E}(n,m,\delta)].
\]
Analogously, we can define $\mu{}_{X}(n,m,\delta)\in\mathcal{M}(\Theta)$
as the same measure, except that the distribution of entrants is the
one facing the marginal exit type, $F(\cdot\mid m)$, i.e.,
\[
\mu{}_{X}(n,m,\delta)=F(\cdot\mid m)n+\varrho\Phi_{m}[\mu{}_{X}(n,m,\delta)].
\]

By Lemma \ref{lemma_Phi}(i),
\begin{equation}
\Lambda(m,\delta)=\mu{}_{E}(n,m,\delta)/n=(I-\varrho\Phi_{m})^{-1}[\nu]\label{Lambda}
\end{equation}
and
\[
\Lambda{}_{X}(m,\delta)\equiv\mu{}_{X}(n,m,\delta)/n\equiv(I-\varrho\Phi_{m})^{-1}[F(\cdot\mid m)].
\]

Our goal is to show that we can express the value functions in terms
of weighted profit functions, with weights $\Lambda$ and $\Lambda_{X}$
for the entry and exit conditions, respectively. For this purpose,
we define the weighted profit function $\bar{\pi}:[0,\infty)\times\mathcal{M}(\Theta)\rightarrow\mathbb{R}$,
where
\[
\bar{\pi}(p,\eta)=\int\pi(p,\theta)\eta(d\theta)
\]
for all $p\geq0$ and $\eta\in\mathcal{M}(\Theta)$. We then state
the following two conditions, which the next lemma will show to be
equivalent to conditions (ii')-(iii').

\bigskip{}

Condition (ii''). $\bar{\pi}(p,\Lambda(m,\delta))=\kappa$.

Condition (iii''). $\bar{\pi}(p,\Lambda_{X}(m,\delta))=0$ if $m\in(\theta_{L},\theta_{H})$,
$\geq0$ if $m=\theta_{H}$, and $\leq0$ if $m=\theta_{L}$.\bigskip{}

\begin{lem}
$(p,m)$ solves (ii')-(iii') if and only if it solves (ii'')-(iii'').
\end{lem}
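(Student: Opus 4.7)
The plan is to iterate the defining equation for $V_{m}$ into a Neumann series in $T_{m}$, apply the adjoint relationship from Lemma \ref{lemma_Phi}(ii) term by term, and then recognize the resulting limiting measures as $\Lambda(m,\delta)$ and $\Lambda_{X}(m,\delta)$ respectively. Concretely, I would first rewrite equation (\ref{eq:V_m}) by iteration: since $T_{m}$ is a bounded linear operator on (say) the space of bounded measurable functions on $\Theta$ with $\|\varrho T_{m}\|<1$ (because $\varrho=\delta(1-\rho)<1$ and $T_{m}$ has operator norm at most $1$), the contraction gives
\[
V_{m}(p,\cdot)=\sum_{j=0}^{\infty}\varrho^{j}T_{m}^{j}[\pi(p,\cdot)].
\]
Integrating this identity against $\nu$ and against $F(\cdot\mid m)$ is legitimate by the bounded-convergence theorem, since $\pi(p,\cdot)$ is bounded on $\Theta$ under the model assumptions and each $T_{m}^{j}$ is a contraction.

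Next I would swap each integral with the sum to get
\[
\int_{\Theta}V_{m}(p,\theta)\,\nu(d\theta)=\sum_{j=0}^{\infty}\varrho^{j}\int_{\Theta}T_{m}^{j}[\pi(p,\cdot)](\theta)\,\nu(d\theta),
\]
and analogously with $\nu$ replaced by $F(\cdot\mid m)$. The key step is then to invoke Lemma \ref{lemma_Phi}(ii), which says $\Phi_{m}^{j}$ is the adjoint of $T_{m}^{j}$, to rewrite each term as
\[
\int_{\Theta}T_{m}^{j}[\pi(p,\cdot)](\theta)\,\eta(d\theta)=\int_{\Theta}\pi(p,\theta)\,\Phi_{m}^{j}[\eta](d\theta)
\]
for $\eta=\nu$ and for $\eta=F(\cdot\mid m)$. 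Summing in $j$ and using Lemma \ref{lemma_Phi}(i), i.e. $\sum_{j}\varrho^{j}\Phi_{m}^{j}=(I-\varrho\Phi_{m})^{-1}$, produces
\[
\int_{\Theta}V_{m}(p,\theta)\,\nu(d\theta)=\int_{\Theta}\pi(p,\theta)\,(I-\varrho\Phi_{m})^{-1}[\nu](d\theta)=\bar{\pi}(p,\Lambda(m,\delta)),
\]
using the definition of $\Lambda(m,\delta)$ in equation (\ref{Lambda}). The same computation with entrants distributed as $F(\cdot\mid m)$ yields
\[
\int_{\Theta}V_{m}(p,\theta')\,F(d\theta'\mid m)=\bar{\pi}(p,\Lambda_{X}(m,\delta)).
\]

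Once these two identities are in hand, the equivalence of (ii')-(iii') with (ii'')-(iii'') is immediate: condition (ii') becomes $\bar{\pi}(p,\Lambda(m,\delta))=\kappa$, and the three cases of (iii') translate into the three cases of (iii'') without modification. The step I expect to be most delicate is the termwise application of the adjoint identity and the interchange of the infinite sum with the integral; one needs the $\sigma$-finiteness of the iterated measures $\Phi_{m}^{j}[\nu]$ and $\Phi_{m}^{j}[F(\cdot\mid m)]$ and uniform boundedness of $\pi(p,\cdot)$ on $\Theta$ (guaranteed by Assumption \ref{assu:costs} and the fact that optimal $q(p,\cdot)$ is bounded for fixed $p$) to ensure absolute convergence, but once Lemma \ref{lemma_Phi} is in hand the argument is essentially a transfer of the Neumann expansion from the primal side to the dual side.
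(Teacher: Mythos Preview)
Your proposal is correct and follows essentially the same route as the paper: iterate (\ref{eq:V_m}) into the Neumann series $V_{m}(p,\cdot)=\sum_{j\geq0}\varrho^{j}T_{m}^{j}[\pi(p,\cdot)]$, integrate against $\nu$ (respectively $F(\cdot\mid m)$), use Lemma~\ref{lemma_Phi}(ii) termwise to pass to $\Phi_{m}^{j}$, and sum via Lemma~\ref{lemma_Phi}(i) to obtain $\Lambda(m,\delta)$ (respectively $\Lambda_{X}(m,\delta)$). The paper's argument is precisely this, stated more tersely and without the explicit justification of the sum--integral interchange that you supply.
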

\begin{proof}
By repeatedly applying equation (\ref{eq:V_m}), it follows that 
\[
V_{m}(p,\theta)=\sum_{j=0}^{\infty}\varrho^{j}T_{m}^{j}[\pi(p,\theta)](\theta).
\]
Then 
\begin{align*}
\int V_{m}(p,\theta)\nu(d\theta) & =\int\sum_{j=0}^{\infty}\varrho^{j}T_{m}^{j}[\pi(p,\cdot)](\theta)\nu(d\theta)\\
 & =\int\pi(p,\theta)\left(\sum_{j=0}^{\infty}\varrho^{j}\Phi_{m}^{j}[\nu](d\theta)\right)\\
 & =\int\pi(p,\theta)\left(I-\varrho\Phi_{m}\right)^{-1}[\nu](d\theta)\\
 & =\int\pi(p,\theta)\Lambda(m,\delta)(d\theta)=\bar{\pi}(p,\Lambda(m,\delta)),
\end{align*}
where the second line follows because $\Phi_{m}^{j}$ is the adjoint
operator of $T_{m}^{j}$ (see Lemma \ref{lemma_Phi}(ii)) and the
last line follows by definition of $\Lambda$ in equation (\ref{Lambda}).
A similar argument establishes $\int_{\Theta}V_{m}(p,\theta')F(d\theta'\mid m)=\bar{\pi}(p,\Lambda_{X}(m,\delta))$.
\end{proof}
\bigskip{}

$\textsc{step 3}.$ We conclude the proof by showing that the solution
to (ii'')-(iii'') is unique and minimizes the average weighted cost
function.\bigskip{}
\bigskip{}

\begin{lem}
\label{lemm:minAC}There is a unique $(p^{e},m^{e})$ satisfying conditions
(ii'')-(iii''), and it is characterized by
\[
\{(p^{e},m^{e})\}=\arg\min_{p',m'}\bar{AC}(p',\Lambda(m',\delta)).
\]
\end{lem}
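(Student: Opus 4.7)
The plan is to identify conditions (ii'') and (iii'') as the first-order/Kuhn--Tucker conditions of the program $\min_{p\geq 0,\,m\in\Theta}\bar{AC}(p,\Lambda(m,\delta))$ and then to verify that this program has a unique minimizer. Writing $\bar{\pi}(p,\eta)=p\bar{q}(p,\eta)-\bigl(\bar{C}(p,\eta)-\kappa\bigr)$ turns (ii'') into the equivalent fixed-point relation $p=\bar{AC}(p,\Lambda(m,\delta))$; the core of the argument is to show that this relation is itself the FOC in $p$, while (iii'') is the corresponding condition in $m$.

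For the FOC in $p$, the firm's optimality condition $C'(q(p,\theta),\theta)=p$ on $\{q(p,\theta)>0\}$ yields the envelope identity
\[
\partial_p\bar{C}(p,\eta)=\int C'(q(p,\theta),\theta)\,\partial_p q(p,\theta)\,\eta(d\theta)=p\,\partial_p\bar{q}(p,\eta),
\]
so $\partial_p\bar{AC}(p,\eta)=\partial_p\bar{q}(p,\eta)\bigl(p-\bar{AC}(p,\eta)\bigr)/\bar{q}(p,\eta)$, which vanishes iff $p=\bar{AC}(p,\eta)$, i.e., iff (ii'') holds. For the FOC in $m$, I differentiate the fixed-point identity $\Lambda(m,\delta)=\nu+\varrho\Phi_m[\Lambda(m,\delta)]$ in $m$; letting $\lambda(\cdot)$ denote the density of $\Lambda(m,\delta)$ (which exists by Assumption~\ref{assu:CDFs_cont} together with the Neumann expansion in Lemma~\ref{lemma_Phi}(i)), the boundary contribution from the upper limit of $\Phi_m$ gives $(I-\varrho\Phi_m)\,\partial_m\Lambda(m,\delta)=\varrho\lambda(m)F(\cdot\mid m)$, so Lemma~\ref{lemma_Phi}(i) yields $\partial_m\Lambda(m,\delta)=\varrho\lambda(m)\Lambda_X(m,\delta)$. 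Substituting this into $\partial_m\bar{AC}$ and using $\bar{C}(p,\Lambda(m,\delta))=p\bar{q}(p,\Lambda(m,\delta))$ (the FOC in $p$), a brief computation yields
\[
\partial_m\bar{AC}(p,\Lambda(m,\delta))=-\frac{\varrho\lambda(m)}{\bar{q}(p,\Lambda(m,\delta))}\,\bar{\pi}(p,\Lambda_X(m,\delta)),
\]
so the interior FOC $\partial_m\bar{AC}=0$ is precisely $\bar{\pi}(p,\Lambda_X(m,\delta))=0$; the endpoint sign conditions in (iii'') are the corresponding Kuhn--Tucker conditions at $m=\theta_L$ and $m=\theta_H$, since the sign of $\partial_m\bar{AC}$ is opposite to that of $\bar{\pi}(p,\Lambda_X)$.

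For each fixed $m\in\Theta$, monotonicity of $p\mapsto\bar{\pi}(p,\Lambda(m,\delta))$ (Assumption~\ref{assu:costs}) yields a unique solution $p(m)$ to (ii''), which by the analysis above equals $\min_p\bar{AC}(p,\Lambda(m,\delta))=\bar{AC}(p(m),\Lambda(m,\delta))$; since $\Theta$ is compact and $p(\cdot)$ is continuous, a minimizer of $p(m)$ over $\Theta$ exists. The main obstacle is uniqueness of this minimizer: at any critical $m^*$, the envelope theorem gives $p'(m^*)=\partial_m\bar{AC}(p(m^*),\Lambda(m^*,\delta))=0$, hence $\bar{\pi}(p(m^*),\Lambda_X(m^*,\delta))=0$, and I plan to show that at each such $m^*$ the derivative $\tfrac{d}{dm}\bar{\pi}(p(m),\Lambda_X(m,\delta))\bigl|_{m=m^*}$ is strictly negative. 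This should follow from computing $\partial_m\Lambda_X$ in the same way as $\partial_m\Lambda$ and invoking Assumption~\ref{assu:FOSD} together with the monotonicity of $\pi(p,\cdot)$ in $\theta$: increasing $m$ FOSD-shifts both $F(\cdot\mid m)$ and $\Lambda_X(m,\delta)$ toward less profitable types while, at $m^*$, $p(m)$ is stationary. Strict decrease of $g(m):=\bar{\pi}(p(m),\Lambda_X(m,\delta))$ at every zero forces $g$ to have at most one zero, yielding a unique minimizer $(p^e,m^e)$ and completing the identification with conditions (ii'')-(iii'').
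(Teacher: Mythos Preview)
Your approach is genuinely different from the paper's, and your first-order computations are correct. The paper never differentiates: for fixed $m$ it establishes $p=\bar{AC}(p,\Lambda(m))\Leftrightarrow p=\min_{p'}\bar{AC}(p',\Lambda(m))$ via revealed-preference inequalities derived from optimality of $q(p,\theta)$, and for uniqueness in $m$ it proves a key optimality lemma showing that whenever $\bar{\pi}(p,\Lambda_X(m))=0$ the threshold $m$ is the value-maximizing exit rule, so $\int V_{m'}(p,\cdot)\,d\nu<\int V_m(p,\cdot)\,d\nu=\kappa$ for every $m'\neq m$, whence $p<p_E(m')$ globally. Your differential route---especially the identity $\partial_m\Lambda(m,\delta)=\varrho\,\lambda(m)\,\Lambda_X(m,\delta)$---is elegant and makes the link between (iii'') and the first-order condition in $m$ very transparent. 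The paper's approach buys robustness (no smoothness of $m\mapsto\Lambda(m,\delta)$ is needed) and delivers global minimality in one stroke; yours buys a cleaner geometric picture.

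The genuine gap is in your uniqueness step. You assert that increasing $m$ FOSD-shifts $\Lambda_X(m,\delta)$ and combine this with monotonicity of $\pi(p,\cdot)$ to conclude $g'(m^*)<0$. But FOSD monotonicity of $m\mapsto\Lambda_X(m,\delta)$ is nowhere established and is not immediate: raising $m$ both shifts the ``seed'' measure $F(\cdot\mid m)$ up and enlarges the survival set in $\Phi_m$, and these effects interact through the resolvent $(I-\varrho\Phi_m)^{-1}$. If you actually carry out the computation of $\partial_m\Lambda_X$ you propose, you obtain $\partial_m\Lambda_X=(I-\varrho\Phi_m)^{-1}[\partial_m F(\cdot\mid m)]+\varrho\,\lambda_X(m)\,\Lambda_X$; at $m^*$ the second term paired with $\pi$ vanishes, but the first term paired with $\pi$ does \emph{not} have an obvious sign from monotonicity of $\pi$ alone. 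The missing step is to use the adjoint relation (Lemma~\ref{lemma_Phi}(ii)) to rewrite it as $\int V_{m^*}(p^*,\theta)\,\partial_m F(d\theta\mid m)$ and then invoke Assumption~\ref{assu:FOSD} together with monotonicity of $V_{m^*}(p^*,\cdot)$---not just of $\pi$---in $\theta$ (which itself requires an argument, and whose strictness must be checked). This is fixable, but it is exactly where the paper's optimality-of-threshold lemma is doing the real work, and your sketch as written does not yet supply it.
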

\begin{proof}
See the Appendix.
\end{proof}
\bigskip{}

Figure \ref{fig:E=000026X} describes the intuition behind Lemma \ref{lemm:minAC}.
The pair $(p^{e},m^{e})$ that solves (ii'')-(iii'') is given by the
intersection of the zero entry-profit schedule $\bar{\pi}(p,w,\Lambda(m,\delta))=\kappa$
and the zero exit-profit schedule $\bar{\pi}(p,w,\Lambda_{X}(m,\delta))=0$
in the $(p,m)$ space. By a simple generalization of the textbook
model, the former equation is equivalent to the condition that $p=\bar{AC}(p,w,\Lambda(m,\delta))=\min_{p'}\bar{AC}(p^{'},w,\Lambda(m,\delta))$;
denote the solution to this equation by $\hat{p}(m)$. As illustrated
by the figure, it is also the case that the zero exit-profit schedule
intersects the zero entry-profit schedule at the minimum point of
the latter. Thus, $m^{e}$ minimizes $\bar{AC}(\hat{p}(m),\Lambda(m,\delta))$.
In other words, $(p^{e},m^{e})$ jointly minimize $\bar{AC}$, as
stated in Lemma \ref{lemm:minAC}.

\begin{figure}
\begin{centering}
	\newpsobject{showgrid}{psgrid}{subgriddiv=1,griddots=10,gridlabels=7pt,gridcolor=red} 
	\psset{unit=6mm} 
	\psset{algebraic=true} 
	\begin{pspicture}(0,0)(10,10) 

\psline[linewidth=0.05]{->}(0,0)(10,0) 	
\psline[linewidth=0.05]{->}(0,0)(0,10) 	
\rput(11,-0.5){$m$ (exit threshold)} 	
\rput(-1.3,9){$p$ (price)}	 

\pscurve(0,2)(4,4)(8,10) 

\pscurve(1,7)(3.7,4)(9,7) 

\psline[linestyle=dashed](0,4)(4,4) 
	\psline[linestyle=dashed](4,0)(4,4) 

\psline[linestyle=dotted](0,4.9)(2.4,4.9) 	
\psline[linestyle=dotted](2.4,0)(2.4,4.9)

\rput(4,-0.5){$m^{e}$} 	
\rput(-0.5,4){$p^{e}$} 
	\rput(2.4,-0.5){$\tilde{m}$} 	
\rput(-0.5,5){$\tilde{p}$} 

	\rput(5,10.1){\small{Exit condition:}}	 
	\rput(5,9.5){\small{$\bar{\pi}(p,\Lambda_{X}(m,\delta))=0$}}	 		
	\rput(11,6.1){\small{Entry condition:}}		
	 	\rput(11,5.5){\small{$\bar{\pi}(p,\Lambda(m,\delta))=\kappa$}} 
		\end{pspicture}
\par\end{centering}
\bigskip{}

\bigskip{}

\caption{\label{fig:E=000026X}Characterization of entry and exit conditions.\protect \\
\emph{\footnotesize{}The figure shows the unique price and exit threshold
$(p^{e},m^{e})$ that simultaneously solves the entry and exit conditions.
This solution is given by the intersection of the zero entry-profit
and the zero exit-profit schedules.}}
\end{figure}

The reason why the schedules in Figure \ref{fig:E=000026X} intersect
at the minimum of the zero entry-profit schedule is as follows. Consider
a point $(\tilde{p},\tilde{m})$ on the zero entry-profit schedule
such that $\tilde{m}<m^{e}$. This point lies above the zero exit-profit
schedule; that is, $\bar{\pi}(\tilde{p},\Lambda_{X}(\tilde{m},\delta))>0$,
and so the marginal type $\tilde{m}$ makes a strictly positive profit.
If the marginal type were slightly increased from $\tilde{m}$ to
$\tilde{m}+\varepsilon$, then a potential entrant would stay whenever
drawing a type in $(\tilde{m},\tilde{m}+\varepsilon)$. By continuity,
its profit from having a type in the interval would be positive, and
so the firm's ex-ante profit would increase from zero to a strictly
positive number. The price would then need to fall in order to remain
on the zero entry-profit schedule. Thus, the zero entry-profit schedule
is decreasing whenever it is above the zero exit-profit schedule.
By a similar argument, the zero entry-profit is increasing whenever
it is below the zero exit-profit schedule.

\pagebreak{}

\addcontentsline{toc}{section}{References}

\bibliographystyle{aer}
\bibliography{bibtex}

\appendix

\section{Appendix}

\subsection{Proof of Lemma \ref{lemma_Phi}}

Let $L(\mathcal{M}(\Theta))$ denote the space of linear bounded operators
mapping $\mathcal{M}(\Theta)$ to itself. (i) Since $\varrho||\Phi_{m}||<1$
(here $||.||$ is the operator norm\footnote{The space $\mathcal{M}(\Theta)$ is equipped with the total variation
norm and the operator norm $||\Phi_{m}||\equiv\sup_{\eta\ne0}\frac{||\Phi_{m}[\eta]||_{TV}}{||\eta||_{TV}}\leq1$
where $||\eta||_{TV}\equiv0.5\int_{\Theta}|f_{\eta}(\theta)|d\theta$
where $f_{\eta}$ is the Radon-Nikodym derivative of $\eta$ with
respect to Lebesgue.}), it is easy to see that the sequence $(\sum_{j=0}^{n}\varrho^{j}\Phi_{m}^{j})_{n}$
is Cauchy (under the operator norm). Because $L(\mathcal{M}(\Theta))$
is complete, then $S\equiv\sum_{j=0}^{\infty}\varrho^{j}\Phi_{m}^{j}\in L(\mathcal{M}(\Theta))$.
It is easy to see that $\varrho\Phi_{m}S=S-I$ or, equivalently, $(I-\varrho\Phi_{m})S=I$;
similarly $S(I-\varrho\Phi_{m})=I$. Therefore, $S$ is the inverse
of $(I-\varrho\Phi_{m})$, denoted by $(I-\varrho\Phi_{m})^{-1}$.
(ii) Let $g\in L^{\infty}(\Theta)$ and let $\eta$ be any Borel measure
of $\Theta$. By Fubini's Theorem,
\begin{equation}
\int_{\Theta}T_{m}[g](\theta)\eta(d\theta)=\int_{\Theta}g(\theta')\left\{ \int1\{\theta\leq m\}F(d\theta'\mid\theta)\eta(d\theta)\right\} =\int_{\Theta}g(\theta')\Phi_{m}[\eta](d\theta').\label{eq:Fubini}
\end{equation}
Expression (\ref{eq:Fubini}) can be equivalently be cast as $\langle T_{m}\left[g\right],\eta\rangle=\langle g,\Phi_{m}\left[\eta\right]\rangle$,
where $\langle.,.\rangle$ denotes the integral operation. Using this
notation, it is easy to see that, for any $j$, 
\[
\langle T_{m}^{j}\left[g\right],\eta\rangle=\langle T_{m}\left[T_{m}^{j-1}\left[g\right]\right],\eta\rangle=\langle\left[T_{m}^{j-1}\left[g\right]\right],\Phi_{m}\left[\eta\right]\rangle=...=\langle g,\Phi_{m}^{j}\left[\eta\right]\rangle.\,\,\,\,\square
\]

\subsection{Proof of Lemma \ref{lemm:minAC}}

Throughout this proof, we use the following properties for $V_{m}$.
The proof of these properties follow from standard fixed point arguments
and are thus omitted: (1) For any $m\in\Theta$, $p\mapsto M[V_{m}(p,.)](m)$
is nondecreasing and increasing over $p$ such that $q(p,m)>0$; (2)
For any $m\in\Theta$, $\theta\mapsto M[V_{m}(p,.)](\theta)$ is decreasing;
(3) For any $m\in\Theta$, $p\mapsto M[V_{m}(p,\cdot)](m)$ is continuous.

Before proving Lemma \textbf{\ref{lemm:minAC}}, we state and prove
two preliminary results.

\bigskip{}

\begin{lem}
\label{lem:MVm-shape}For any $p>0$ and any $m\in\Theta$ such that
$\bar{\pi}(p,\Lambda_{X}(m))=0$, \textup{$M[V_{m'}(p,.)-V_{m}(p,.)](\theta)<0$
for all $m'\ne m$ and $\theta\in\Theta$.}
\end{lem}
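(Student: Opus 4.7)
The strategy is to argue that, under the hypothesis $\bar{\pi}(p,\Lambda_{X}(m,\delta))=0$, the threshold $m$ is the optimal exit policy at price $p$, so that $V_{m}$ coincides with the true value function $V$. Then for any $m'\neq m$ the gap $V_{m}-V_{m'}$ is nonnegative everywhere and strictly positive on an interval of positive Lebesgue measure; finally, full support of $F(\cdot\mid\theta)$ from Assumption~\ref{assu:CDFs_cont}(ii) upgrades this to a strict inequality at \emph{every} $\theta$ once we integrate against $F(d\theta'\mid\theta)$.

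For the first step, I would observe that the adjoint identity established in Step~2 of the main proof makes $\bar{\pi}(p,\Lambda_{X}(m,\delta))=0$ equivalent to $M[V_{m}(p,\cdot)](m)=0$. Combined with property~(2) (that $\theta\mapsto M[V_{m}(p,\cdot)](\theta)$ is decreasing), this yields $M[V_{m}(p,\cdot)](\theta)>0$ for $\theta<m$ and $M[V_{m}(p,\cdot)](\theta)<0$ for $\theta>m$. Hence the threshold policy $m$ is the optimal policy with respect to $V_{m}$: continuing is strictly preferred below $m$ and exiting strictly preferred above. A standard verification / one-shot-deviation argument then gives $V_{m}(p,\cdot)=V(p,\cdot)$, so $V_{m'}(p,\cdot)\leq V_{m}(p,\cdot)$ pointwise for every $m'\in\Theta$.

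For the second step, suppose first $m'>m$. For any $\theta\in(m,m']$, the definition of $V_{m}$ gives $V_{m}(p,\theta)=\pi(p,\theta)$ (the firm exits under threshold $m$), while $V_{m'}(p,\theta)=\pi(p,\theta)+\varrho M[V_{m'}(p,\cdot)](\theta)$ (it stays under threshold $m'$). Pointwise optimality gives $M[V_{m'}(p,\cdot)](\theta)\leq M[V_{m}(p,\cdot)](\theta)<0$ throughout $(m,m']$, so $V_{m}(p,\theta)-V_{m'}(p,\theta)=-\varrho M[V_{m'}(p,\cdot)](\theta)>0$ on this interval. A symmetric argument handles $m'<m$ by looking at $\theta\in(m',m)$, where $V_{m}(p,\theta)-V_{m'}(p,\theta)=\varrho M[V_{m}(p,\cdot)](\theta)>0$. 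Either way, $V_{m}-V_{m'}$ is strictly positive on a nontrivial subinterval of $\Theta$, hence on a set of positive Lebesgue measure.

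Finally, Assumption~\ref{assu:CDFs_cont}(ii) supplies a continuous density $f(\cdot\mid\theta)$ with full support on $\Theta$, so
\[
M[V_{m'}(p,\cdot)-V_{m}(p,\cdot)](\theta)=\int_{\Theta}\bigl(V_{m'}(p,\theta')-V_{m}(p,\theta')\bigr)\,f(\theta'\mid\theta)\,d\theta'<0
\]
for every $\theta\in\Theta$, because the integrand is nonpositive on $\Theta$ and strictly negative on a positive-Lebesgue-measure subset. The main technical obstacle is the verification step in paragraph two: upgrading the \emph{single} indifference condition $M[V_{m}(p,\cdot)](m)=0$, via properties (1)--(3) and the one-shot-deviation principle, into global optimality of the threshold policy $m$. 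Once that is in hand, the remainder of the argument is driven entirely by the continuity of $\pi$ in $\theta$ and the full-support property of $F$.
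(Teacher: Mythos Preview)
Your argument is correct and takes a genuinely different route from the paper's. The paper works directly at the level of the recursion: it writes $M[V_{m'}-V_{m}]$ as the fixed point of an affine map $g\mapsto A_{m',m}+\varrho K_{m'}[g]$, inverts via the Neumann series $\sum_{j\geq0}\varrho^{j}K_{m'}^{j}[A_{m',m}]$, and checks term by term that each summand is strictly negative (the sign of $A_{m',m}$ coming, as in your argument, from $M[V_{m}(p,\cdot)](m)=0$ together with property~(2)). You instead observe that the hypothesis forces $V_{m}$ to satisfy the Bellman equation, hence $V_{m}=V\geq V_{m'}$ everywhere; strictness on the interval between $m$ and $m'$ is then read off from the explicit one-period difference, and full support of $F(\cdot\mid\theta)$ upgrades this to $M[V_{m'}-V_{m}](\theta)<0$ at every $\theta$. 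Your route is shorter and more conceptual---it makes clear that the lemma is really the statement that suboptimal thresholds strictly destroy value---while the paper's operator argument is more self-contained (it never invokes the Bellman equation or uniqueness of its fixed point) and reuses the same Neumann-series machinery already set up in Lemma~\ref{lemma_Phi}.
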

\begin{proof}
Fix any $\theta\in\Theta.$ We first show the result for $m'<m$.
By definition of $V_{m}(p,.)$,
\begin{align*}
M[V_{m'}(p,.)-V_{m}(p,.)](\theta)= & \varrho\int\left\{ 1\{\theta'\leq m'\}M[V_{m'}(p,.)]-1\{\theta'\leq m\}M[V_{m}(p,.)](\theta')\right\} F(d\theta'\mid\theta)\\
= & \varrho\int1\{m\leq\theta'\leq m'\}M[V_{m}(p,.)](\theta')F(d\theta'\mid\theta)\\
 & +\varrho\int1\{\theta'\leq m'\}M[V_{m'}(p,.)-V_{m}(p,.)](\theta')F(d\theta'\mid\theta)\\
\equiv & A_{m',m}(\theta)+\varrho K_{m'}\left[M[V_{m'}(p,.)-V_{m}(p,.)]\right](\theta),
\end{align*}
where $K_{m'}:L^{\infty}(\Theta)\rightarrow L^{\infty}(\Theta)$ is
given by $K_{m'}[g](\theta)=\int1\{\theta'\leq m'\}g(\theta')F(d\theta'\mid\theta)$.

Observe that $M[V_{m}(p,.)](m)=\bar{\pi}(p,\Lambda_{X}(m))=0$ and
also $\theta\mapsto M[V_{m}(p,.)](\theta)$ is decreasing, hence $1\{m\leq\theta\leq m'\}M[V_{m}(p,.)](\theta)<0$,
which implies $A_{m',m}(.)<0$ (note that $F(\cdot\mid\theta)$ has
full support for all $\theta\in\Theta$ by Assumption \ref{assu:CDFs_cont}(ii)).
Since $\varrho||K_{m'}||=\varrho\sup_{g\in L^{\infty}(\Theta)}\frac{||K_{m'}[g]||_{L^{\infty}(\Theta)}}{||g||_{L^{\infty}(\Theta)}}\leq\varrho<1$,
by the analogous arguments in the proof of Lemma \ref{lemma_Phi},
\[
M[V_{m'}(p,.)-V_{m}(p,.)](\theta)=\left(I-\varrho K_{m'}\right)^{-1}\left[A_{m',m}\right](\theta)=\sum_{j=0}^{\infty}\varrho^{j}K_{m'}^{j}\left[A_{m',m}\right](\theta).
\]

We note that for any $g(.)<0$, $K_{m'}\left[g\right](.)=\int1\{\theta'\leq m'\}g(\theta')F(d\theta'\mid.)<0$.
Hence, from this fact and the fact that $A_{m',m}(.)<0$, we can show
inductively that for each $j$, $\varrho^{j}K_{m'}^{j}\left[A_{m',m}\right](.)$
and thus $M[V_{m'}(p,.)-V_{m}(p,.)](\theta)<0$. 

We now show the case for $m'>m$. Following the same steps as those
above one obtains
\begin{align*}
M[V_{m'}(p,.)-V_{m}(p,.)](\theta)= & -A_{m,m'}(\theta)+\varrho K_{m'}\left[M[V_{m'}(p,.)-V_{m}(p,.)]\right](\theta).
\end{align*}

Since $A_{m,m'}(\theta)=1\{m'\leq\theta\leq m\}M[V_{m}(p,.)](\theta)$,
it follows that $A_{m,m'}(\theta)>0$. This observation and analogous
derivations to the ones for $m'<m$ imply that $M[V_{m'}(p,.)-V_{m}(p,.)](\theta)<0$.
\end{proof}
\begin{lem}
\label{lem:pi-bar-cont}$(p,m)\mapsto\bar{\pi}(p,\Lambda(m))$ and
$(p,m)\mapsto\bar{\pi}(p,\Lambda_{X}(m))$ are continuous.
\end{lem}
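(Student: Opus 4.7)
The strategy is to reduce joint continuity of each map to two ingredients and combine them by a triangle inequality. For fixed $\eta$, the map $p\mapsto \bar\pi(p,\eta)=\int \pi(p,\theta)\eta(d\theta)$ is continuous in $p$ by dominated convergence, using that $p\mapsto\pi(p,\theta)$ is convex (as the pointwise maximum of affine functions in $p$) and hence continuous, and that $(p,\theta)\mapsto \pi(p,\theta)$ is uniformly bounded on compact $p$-sets (via Assumption \ref{assu:costs}: the first-order condition $C'(q,\theta)=p$ coupled with $\lim_{q\to\infty} C'(q,\theta)=\infty$ and $C''>0$ bounds $q(p,\theta)$, hence $\pi$). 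For fixed $p$, the map $\eta\mapsto \bar\pi(p,\eta)$ is continuous in total variation norm, because $|\int \pi(p,\cdot)\, d(\eta-\eta')|\leq \|\pi(p,\cdot)\|_\infty\cdot \|\eta-\eta'\|_{TV}$. So everything reduces to showing that $m\mapsto \Lambda(m,\delta)$ and $m\mapsto \Lambda_X(m,\delta)$ are continuous as maps into $(\mathcal{M}(\Theta),\|\cdot\|_{TV})$.

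The core step is TV-continuity of $\Lambda(m,\delta)=(I-\varrho\Phi_m)^{-1}[\nu]$ in $m$. I would use the resolvent identity $A^{-1}-B^{-1}=A^{-1}(B-A)B^{-1}$ with $A=I-\varrho\Phi_m$, $B=I-\varrho\Phi_{m'}$, applied to $\nu$, to obtain
\[
\Lambda(m,\delta)-\Lambda(m',\delta)=\varrho\,(I-\varrho\Phi_m)^{-1}(\Phi_m-\Phi_{m'})[\Lambda(m',\delta)].
\]
By Lemma \ref{lemma_Phi}(i), $\|(I-\varrho\Phi_m)^{-1}\|\leq (1-\varrho)^{-1}$, so I need only control $\|(\Phi_m-\Phi_{m'})[\Lambda(m',\delta)]\|_{TV}$. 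Assuming $m>m'$ without loss of generality, $(\Phi_m-\Phi_{m'})[\eta](A)=\int_{(m',m]}F(A\mid\theta)\,\eta(d\theta)$, and since each $F(\cdot\mid\theta)$ is a probability measure, its TV norm is bounded by a constant times $\eta((m',m])$. Because $\nu$ is absolutely continuous with respect to Lebesgue and $\Phi_m$ preserves absolute continuity (its image has density $\int_{\theta_L}^m f(\theta'\mid\theta)\eta(d\theta)$), iterating shows $\Lambda(m',\delta)\in\mathcal{M}(\Theta)$ is absolutely continuous, so $\Lambda(m',\delta)((m',m])\to 0$ as $m\to m'$. The same identity handles $\Lambda_X(m,\delta)=(I-\varrho\Phi_m)^{-1}[F(\cdot\mid m)]$, with the extra term $(I-\varrho\Phi_m)^{-1}[F(\cdot\mid m)-F(\cdot\mid m')]$ controlled by $\|F(\cdot\mid m)-F(\cdot\mid m')\|_{TV}$, which vanishes as $m\to m'$ by the joint continuity of the density $f(\theta'\mid\theta)$ in Assumption \ref{assu:CDFs_cont}(ii) together with dominated convergence on $\Theta$ compact.

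The main obstacle is precisely the TV-continuity of $m\mapsto\Lambda(m,\delta)$, and the delicate point is that $\Phi_m$ is \emph{not} continuous in $m$ in operator norm: a test measure concentrated on a small interval near $m$ would register a jump of order one, so the naive perturbation bound on $(I-\varrho\Phi_m)^{-1}$ fails. The resolvent identity circumvents this by applying the difference $\Phi_m-\Phi_{m'}$ only to $\Lambda(m',\delta)$, which lies in the absolutely continuous subspace invariant under $\Phi$; on that subspace $\Phi_m$ is strongly continuous in $m$, and this is enough. Once this step is established, the triangle inequality
\[
|\bar\pi(p_n,\Lambda(m_n,\delta))-\bar\pi(p,\Lambda(m,\delta))|\leq \|\pi(p_n,\cdot)\|_\infty\,\|\Lambda(m_n,\delta)-\Lambda(m,\delta)\|_{TV}+\int|\pi(p_n,\theta)-\pi(p,\theta)|\Lambda(m,\delta)(d\theta)
\]
delivers joint continuity for any $(p_n,m_n)\to(p,m)$, and the same argument with $\Lambda_X$ completes the lemma.
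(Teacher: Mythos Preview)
Your argument is correct and takes a genuinely different route from the paper's. The paper does not work with the measures $\Lambda(m)$ and $\Lambda_{X}(m)$ directly; instead it exploits the identity $\bar{\pi}(p,\Lambda_{X}(m))=M[V_{m}(p,\cdot)](m)$ (and the analogous one for $\Lambda$), where $M[g](\theta)=\int g(\theta')F(d\theta'\mid\theta)$, and proves continuity of $(p,m)\mapsto M[V_{m}(p,\cdot)](m)$ via a triangle-inequality split into $\sup_{p\in C}\|M[V_{m_{n}}(p,\cdot)-V_{m}(p,\cdot)]\|_{L^{\infty}}$ plus a term handled by standard contraction-mapping continuity. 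The first term is then reduced, after one iteration of the Bellman-type recursion, to showing that a family $\{\sup_{p\in C}|B_{m_{n},m,p}(\cdot)|\}_{n}$ converges to zero \emph{uniformly} in $\theta$, which the paper obtains from pointwise convergence plus equicontinuity via Arzel\`a--Ascoli (using the joint continuity of $f(\cdot\mid\cdot)$ on the compact set $\Theta\times\Theta$).

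Your approach bypasses the value function entirely and establishes TV-continuity of $m\mapsto\Lambda(m)$ and $m\mapsto\Lambda_{X}(m)$ via the resolvent identity, applying $\Phi_{m}-\Phi_{m'}$ only to the fixed absolutely continuous measure $\Lambda(m')$ (or $F(\cdot\mid m')$). This is cleaner in two respects: it avoids Arzel\`a--Ascoli altogether, and it cleanly decouples the $p$-dependence (handled by dominated convergence with the bound $\sup_{\theta}|\pi(p,\theta)|\le\max\{\pi(p,\theta_{L}),C(0,\theta_{H})\}$, finite on compact $p$-sets) from the $m$-dependence. The paper's route, by contrast, stays closer to the dynamic-programming structure and delivers a slightly stronger intermediate conclusion (uniform-in-$\theta$ convergence of the iterated value functions), which is not needed for the lemma itself but resonates with the value-function manipulations used elsewhere in the proof of Theorem~\ref{thm:main_charact}. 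One minor point: in your resolvent step it is cleanest to fix the base point $m'=m_{0}$ and let $m=m_{n}$ vary, so that the measure $\Lambda(m_{0})$ inside $(\Phi_{m_{n}}-\Phi_{m_{0}})[\Lambda(m_{0})]$ is fixed and its mass on $(m_{n}\wedge m_{0},\,m_{n}\vee m_{0}]$ tends to zero by absolute continuity; your write-up already does this, but it is worth making the choice of fixed vs.\ varying index explicit.
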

\begin{proof}
We only prove continuity of $(p,m)\mapsto\bar{\pi}(p,\Lambda_{X}(m))$
since continuity of $(p,m)\mapsto\bar{\pi}(p,\Lambda(m))$ is obtained
by an analogous argument. By definition of $V_{m}$, we want to show
that $(p,m)\mapsto M[V_{m}(p,\cdot)](m)$ is continuous. Let $(p_{n},m_{n})\rightarrow(p,m)$
and note that, for sufficiently large $n$,
\begin{align*}
|M[V_{m_{n}}(p_{n},\cdot)](m_{n})-M[V_{m}(p,\cdot)](m)|\leq & |M[V_{m_{n}}(p_{n},\cdot)](m_{n})-M[V_{m}(p_{n},\cdot)](m_{n})|\\
 & +|M[V_{m}(p_{n},\cdot)](m_{n})-M[V_{m}(p,\cdot)](m)|\\
\leq & \sup_{p\in C}||M[V_{m_{n}}(p,.)-V_{m}(p,\cdot)]||_{L^{\infty}}\\
 & +|M[V_{m}(p_{n},\cdot)](m_{n})-M[V_{m}(p,\cdot)](m)|
\end{align*}
where $C$ is some compact neighborhood of $p$. The second term in
the RHS vanishes because $(p,t)\mapsto M[V_{m}(p,.)](t)$ is continuous
(the proof follows from standard contraction mapping arguments and
is omitted). Thus, the desired result follows by showing that the
first term in the RHS vanish. To do this, note that for any $\theta\in\Theta$
and any $p\in C$,
\begin{align*}
|M[V_{m_{n}}(p,\cdot)-V_{m}(p,\cdot)](\theta)|\leq & \varrho|\int\left(1\{\theta\leq m_{n}\}-1\{\theta\leq m\}\right)M[V_{m_{n}}(p,.)](\theta')f(\theta'\mid\theta)d\theta'|\\
 & +\varrho|\int\left(1\{\theta\leq m\}\right)M[V_{m_{n}}(p,.)-V_{m}(p,\cdot)](\theta')f(\theta'\mid\theta)d\theta'|\\
\leq & \varrho|B_{m_{n},m,p}(\theta)|+\varrho||M[V_{m_{n}}(p,.)-V_{m}(p,\cdot)]||_{L^{\infty}}.
\end{align*}
where $B_{m_{n},m}(\theta)\equiv\int\left(1\{\theta\leq m_{n}\}-1\{\theta\leq m\}\right)M[V_{m'}(p,.)](\theta')f(\theta'\mid\theta)d\theta'$.
Therefore, since $\varrho<1,$ it suffices to show that there exists
a $\delta>0$ such that $\limsup_{n\rightarrow\infty}\sup_{p\in C}||B_{m_{n},m,p}||_{L^{\infty}}=0$.
To do this, we first show that for each $\theta$, $\limsup_{n\rightarrow\infty}\sup_{p\in C}|B_{m_{n},m,p}(\theta)|=0$. 

It is easy to show that there exists a $K<\infty$ such that $\sup_{p\in C}\sup_{m\in\Theta}||V_{m}(p,\cdot)||_{L^{\infty}}\leq K$.
So, for any $\theta'\in\Theta$, 
\[
\sup_{p\in C}|\left(1\{\theta\leq m_{n}\}-1\{\theta\leq m\}\right)M[V_{m_{n}}(p,.)](\theta')f(\theta'\mid\theta)|\leq K|\left(1\{\theta\leq m_{n}\}-1\{\theta\leq m\}\right)f(\theta'\mid\theta)|.
\]
Thus, for any $\theta'\ne m$, $\limsup_{n\rightarrow\infty}\sup_{p\in C}|\left(1\{\theta\leq m_{n}\}-1\{\theta\leq m\}\right)M[V_{m_{n}}(p,.)](\theta')f(\theta'\mid\theta)|=0$.
By the DCT, this readily implies that for any $\theta\in\Theta$,
$\limsup_{n\rightarrow\infty}\sup_{p\in C}|B_{m_{n},m,p}(\theta)|=0$. 

We now show that $\limsup_{n\rightarrow\infty}\sup_{\theta\in\Theta}\sup_{p\in C}|B_{m_{n},m,p}(\theta)|=0$.
Since $\Theta$ is compact and we already established pointwise convergence,
by the Arzela-Ascoli theorem it suffices to show that the family $\{\sup_{p\in C}|B_{m_{n},m,p}(\cdot)|\}_{n\in\mathbb{N}}$
is equi-continuous. To do this, note that for any $\theta$ and $\theta'$,{\small{}\noindent
\begin{align*}
\sup_{p\in C}|B_{m_{n},m,p}(\theta')|-\sup_{p\in C}|B_{m_{n},m,p}(\theta)|\leq & \sup_{p\in C}\left\{ |B_{m_{n},m,p}(\theta')|-|B_{m_{n},m,p}(\theta)|\right\} \\
\leq & \sup_{p\in C}|\int(1\{\theta\leq m_{n}\}-1\{\theta\leq m\})M[V_{m_{n}}(p,.)](t)\left(f(t\mid\theta)-f(t\mid\theta')\right)dt|\\
\leq & K\times|\left(f(t\mid\theta)-f(t\mid\theta')\right)dt|.
\end{align*}
}The RHS is continuous by Assumption \ref{assu:CDFs_cont}(ii), and
its ``modulus of continuity'' does not depend on $m'$. Hence, $\{\sup_{p\in C}|B_{m',m,p}(\cdot)|\}_{n\in\mathbb{N}}$
is equi-continuous.
\end{proof}
\bigskip{}

\textbf{Proof of Lemma \ref{lemm:minAC}. }Throughout the proof, we
fix $\delta$ and omit it from the notation. We now define certain
mappings that will be used throughout the proof. Let $m\mapsto p_{E}(m)\equiv\{p\colon\bar{\pi}(p,\Lambda(m))=\kappa\}$,
and $p\mapsto m_{X}(p)\equiv\{m\colon\bar{\pi}(p,\Lambda_{X}(m))=0\}$
and $m\mapsto p_{X}(m)=\{p\colon m_{X}(p)=m\}$. For the mapping $m_{X}$,
it is implicit that if $\bar{\pi}(p,\Lambda_{X}(m))<0$ then $m_{X}(p)=\theta_{L}$
and if $\bar{\pi}(p,\Lambda_{X}(m))>0$ then $m_{X}(p)=\theta_{H}$.
\bigskip{}

STEP 1. We now show that a solution to the system (ii')-(iii') exists
and is unique and, moreover, we show that for any $(m,p)$ such that
$\bar{\pi}(p,\Lambda_{X}(m))=0$ and $\bar{\pi}(p,\Lambda(m))=\kappa$,
then $p<p_{E}(m')$ for all $m'\ne m$, i.e., $m$ is a global minimizer
of the function $p_{E}$. 

Observe that by Assumption \ref{assu:CDFs_cont}(i), $\nu\left(\left\{ C(0,\theta)>0\right\} \right)>0$.
Also, $supp(\Lambda(m))\supseteq supp(\nu)$ for all $m$, so $\int C(0,\theta)\Lambda(m)(d\theta)>0$.
This implies that if $\bar{q}(p,\Lambda(m))=0$, then $\bar{\pi}(p,\Lambda(m))<0\leq\kappa$,
so a $(p,m)$ such that $\bar{q}(p,\Lambda(m))=0$ can never be a
solution to $\bar{\pi}(p,\Lambda(m))=\kappa$ (if it exists). Therefore,
if the solution exists it would be such that $\bar{q}(p,\Lambda(m))>0$,
in particular, this implies that $p=0$ cannot be part of a solution.
Therefore, henceforth we focus on $(p,m)$ such that $\Lambda(m)(\{\theta\colon q(p,w,\theta)>0\})>0$,
in particular, we only consider $m\in M\equiv\{m\in\Theta\colon\exists p\colon\Lambda(m)(\{\theta\colon q(p,\theta)>0\})>0\}$.

One of the following cases occurs: (a) $p_{E}-p_{X}<0$; (b) $p_{E}-p_{X}>0$
or (c) neither (a) nor (b) occurs (i.e., $p_{E}-p_{X}$ changes signs
at least once in $\Theta$). If (a) occurs, then the solution to (ii')-(iii')
exists and is given by $m=\theta_{L}$ and $p$ such that $\bar{\pi}(p,\Lambda(m))=\kappa$
and $\bar{\pi}(m,\Lambda_{X}(m))<0$. Similarly, if (b) occurs, then
the solution to (ii')-(iii') exists and is given by $m=\theta_{H}$
and $p$ such that $\bar{\pi}(p,\Lambda(m))=\kappa$ and $\bar{\pi}(m,\Lambda_{X}(m))>0$.
Therefore, if either (a) or (b) occurs a solution exists and is unique.

We now show that the same holds if (c) occurs. Clearly, for existence
of a solution in this case it suffices that $m\mapsto p_{X}(m)$ is
continuous (i.e., for any $(m_{n})_{n}$ and $(p_{n})_{n}$ such that
$m_{n}\rightarrow m$ and $p_{n}\in p_{X}(m_{n})$ with $p_{n}\rightarrow p$
then $p\in p_{X}(m)$) and closed- and convex-valued; and that $m\mapsto p_{E}(m)$
is single-valued and continuous. Continuity of $m\mapsto p_{X}(m)$
follows from Lemma \ref{lem:pi-bar-cont}; and by continuity and monotonicity
of $p\mapsto M[V_{m}(p,\cdot)](m)$, it follows that, for each $m\in\Theta$,
$p_{X}(m)$ is a closed interval. Since $p\mapsto\pi(p,\theta)$ is
nondecreasing and increasing over $p$ such that $q(p,\theta)>0$
and $supp(\Lambda(m))\supseteq supp(\nu)$ for all $m$, it follows
that for any $m\in M$, $p\mapsto\bar{\pi}(p,\Lambda(m))$ is increasing.
Hence, $p_{E}(m)$ has at most one element. Moreover, since $\bar{\pi}(0,\Lambda(m))\leq0$
and $\lim\inf_{p\rightarrow\infty}\bar{\pi}(p,\Lambda(m))=\infty$,
continuity of $p\mapsto\bar{\pi}(p,\Lambda(m))$ ensures that $p_{E}(m)$
is non-empty. Finally, continuity of $m\mapsto p_{E}(m)$ follows
from Lemma \ref{lem:pi-bar-cont}.

It thus remains to show that the solution in case (c) is unique. To
do this, it suffices to show that for any $(m,p)$ such that $\bar{\pi}(p,\Lambda_{X}(m))=0$
and $\bar{\pi}(p,\Lambda(m))=\kappa$, then $p<p_{E}(m')$ for all
$m'\ne m$, i.e., $m$ is a global minimizer of the function $p_{E}$.
Since $p\mapsto\bar{\pi}(p,\Lambda(m))$ is increasing, it suffices
to show that for any $m'\ne m$, $\bar{\pi}(p,\Lambda(m'))<\bar{\pi}(p,\Lambda(m))=\kappa$. 

For any $m_{1}\leq m_{2}$, let $\theta\mapsto A_{m_{1},m_{2}}(\theta)\equiv1\{m_{1}\leq\theta\leq m_{2}\}M[V_{m}(p,.)](\theta)$.
Note that $M[V_{m}(p,.)](m)=\bar{\pi}(p,\Lambda_{X}(m))=0$ and also
$\theta\mapsto M[V_{m}(p,.)](\theta)$ is decreasing, so $M[V_{m}(p,.)](.)<(>)0$
for all $\theta>(<)m$. This, in turn, implies that $A_{m_{1},m}(.)>0$
and $A_{m,m_{2}}(.)<0$. 

By definition of $V_{m}$, it follows that: If $m'>m$, 
\[
\bar{\pi}(p,\Lambda(m'))-\bar{\pi}(p,\Lambda(m))=\int A_{m,m'}(\theta)\nu(d\theta)+\varrho\int1\{\theta\leq m'\}M[V_{m'}(p,.)-V_{m}(p,.)](\theta)\nu(d\theta)
\]
and if $m'<m$, 
\[
\bar{\pi}(p,\Lambda(m'))-\bar{\pi}(p,\Lambda(m))=-\int A_{m',m}(\theta)\nu(d\theta)+\varrho\int1\{\theta\leq m'\}M[V_{m'}(p,.)-V_{m}(p,.)](\theta)\nu(d\theta).
\]

By our previous observations, $\int A_{m,m'}(\theta)\nu(d\theta)<0$
and $-\int A_{m',m}(\theta)\nu(d\theta)<0$. By Lemma \ref{lem:MVm-shape}
$M[V_{m'}(p,.)-V_{m}(p,.)](\theta)<0$ for any $m'\ne m$ and any
$\theta\in\Theta$. So, $\bar{\pi}(p,\Lambda(m'))-\bar{\pi}(p,\Lambda(m))=\bar{\pi}(p,\Lambda(m'))-\kappa<0$
as desired. 

\bigskip{}

STEP 2. We now show that the solution $(p^{e},m^{e})$ for (ii')-(iii'),
which is unique (see Step 1), satisfies 
\[
(p^{e},m^{e})=\arg\min_{p',m'}\bar{AC}(p',\Lambda(m',\delta)).
\]
 To show this, we first show that for any $m$, $\bar{\pi}(p,\Lambda(m))=\kappa$
iff $p=\bar{AC}(p,\Lambda(m))$ iff $p=\min_{p'\geq0}AC(p',\Lambda(m))$.
The first `iff' follows from simple algebra. To show the second `iff',
let $p_{m}\equiv\inf_{p:\bar{q}(p,\Lambda(m))=0}p<\infty$, and that
implies that $\bar{AC}(p_{m},\Lambda(m))=\infty$ for all $p\leq p_{m}$.
Suppose for now (we show it below) that the following holds: (I) If
$p<\bar{AC}(p,\Lambda(m))$, then $\bar{AC}(p',\Lambda(m))<\bar{AC}(p,\Lambda(m))$
for all $p'$ such that $p<p'<\bar{AC}(p,\Lambda(m))$; (II) If $p>\bar{AC}(p,\Lambda(m))$,
then $\bar{AC}(p',\Lambda(m))>\bar{AC}(p,\Lambda(m))$ for all $p'>p$;
and (III) There is at most one solution $p$ to $p=\bar{AC}(p,\Lambda(m))$. 

We claim that by (I) and the facts that $\bar{AC}(p_{m},\Lambda(m))=\infty$
for all $p\leq p_{m}$ and continuity of $\bar{AC}(\cdot,\Lambda(m))$
over $p>p_{m}$, there exists a solution $p$ to $p=\bar{AC}(p,\Lambda(m))$
and $p>p_{m}$. To show this, suppose not, i.e., $p<\bar{AC}(p,\Lambda(m))$
for all $p$. This implies that there exists a $p'$ such that $\bar{AC}(p',\Lambda(m))<\bar{AC}(p'',\Lambda(m))$
for all $p''\ne p'$, in particular for any $p'<p''<\bar{AC}(p',\Lambda(m))$.
But this contradicts (I). By (I) and (II), this solution minimizes
$\bar{AC}(\cdot,\Lambda(m))$, and, by (III), this is the unique solution. 

We now prove (I)-(III). Let $p,p'>p_{m}$ and $p'>p$. By definition
of optimality, $pq(p,\theta)-C(q(p,\theta),\theta)\geq pq(p',\theta)-C(q(p',\theta),\theta)$
and $p'q(p',\theta)-C(q(p',\theta),\theta)\geq p'q(p,\theta)-C(q(p,\theta),\theta)$.
By simple algebra, integrating over $\Theta$ using $\Lambda(m)(.)$,
and the fact that $\bar{q}(p',\Lambda(m))-\bar{q}(p,\Lambda(m))>0$
(by the assumption that $p'>p>p_{m}$ and the fact that $p\mapsto q(p,\theta)$
is increasing for over $p$ such that $q(p,\theta)>0$),
\begin{equation}
p\leq\frac{\bar{C}(p',\Lambda(m))-\bar{C}(p,\Lambda(m))}{\bar{q}(p',\Lambda(m))-\bar{q}(p,\Lambda(m))}\leq p'.\label{eq:MC_discrete}
\end{equation}
First, suppose that $p<\bar{AC}(p,\Lambda(m))$. Then (\ref{eq:MC_discrete})
implies that, for all $p'$ such that $p<p'<\bar{AC}(p,\Lambda(m))$,
\[
\bar{AC}(p',\Lambda(m))\equiv\frac{\bar{C}(p',\Lambda(m))}{\bar{q}(p',\Lambda(m))}<\frac{\bar{C}(p,\Lambda(m))}{\bar{q}(p,\Lambda(m))}\equiv\bar{AC}(p,\Lambda(m)).
\]
Thus, (I) is proven. Next, let $p>\bar{AC}(p,\Lambda(m))$. Then (\ref{eq:MC_discrete})
implies that $\bar{AC}(p',\Lambda(m))>\bar{AC}(p,\Lambda(m))$ for
all $p'>p$; thus, (II) is proven. Finally, suppose $p=\bar{AC}(p,\Lambda(m))$
and $p'=\bar{AC}(p',\Lambda(m))$ with $p'>p$. Putting together the
two inequalities in (\ref{eq:MC_discrete}), $p=\bar{AC}(p,\Lambda(m))=\bar{AC}(p',\Lambda(m))=p'$,
which contradicts $p'>p$. A similar contradiction obtains if we assume
$p'<p$. Therefore, $p'=p$, and so (III) is proven. 

Note that $p_{E}(m)=\arg\min_{p'\geq0}\bar{AC}(p',\Lambda(m))$. Moreover,
if $(m^{e},p^{e})$ solves (ii')-(iii'), $p^{e}=p_{E}(m^{e})$. So
in order to show the desired result it suffices to show that $m^{e}=\arg\min_{m\in\Theta}\bar{AC}(p_{E}(m),\Lambda(m))$,
or equivalently, $\bar{AC}(p_{E}(m^{e}),\Lambda(m^{e}))<\bar{AC}(p_{E}(m),\Lambda(m))$
for all $m\ne m^{e}$. By step 1,
\[
p^{e}=p_{E}(m^{e})<p_{E}(m')
\]
for all $m'\ne m$. Since, by our previous calculations in this step,
$\bar{AC}(p_{E}(m),\Lambda(m))=p_{E}(m)$ for all $m$, the desired
result follows. $\square$

\bigskip{}

\end{document}